\def\PICDIR{./}
\title{Topological Hidden Markov Models}
\author{
  Adam B Kashlak, Prachi Loliencar, Giseon Heo\\
  \normalsize 
  Department of Mathematical and Statistical 
  Sciences\\
  \normalsize
  University of Alberta, Edmonton, AB
}
\begin{document}

\maketitle

\begin{abstract}
    The hidden Markov model (HMM) is a classic modeling tool
    with a wide swath of applications.  Its 
    inception considered observations restricted to a 
    finite alphabet, but it was 
    quickly extended to multivariate continuous 
    distributions.  In this article, we further extend
    the HMM from mixtures of normal distributions in 
    $d$-dimensional Euclidean space to general Gaussian
    measure mixtures in locally convex topological spaces.
    The main innovation is the use of the Onsager-Machlup
    functional as a proxy for the probability density
    function in infinite dimensional spaces.  
    This allows for choice of a Cameron-Martin space suitable
    for a given application.
    We demonstrate
    the versatility of this methodology by applying it
    to simulated diffusion processes 
    such as Brownian and fractional Brownian sample paths 
    as well as the Ornstein-Uhlenbeck process.
    Our methodology is applied to the identification of 
    sleep states from overnight polysomnography time
    series data with the aim of diagnosing Obstructive 
    Sleep Apnea in pediatric patients.  It is also 
    applied to a series of annual cumulative snowfall curves
    from 1940 to 1990 in the city of Edmonton, Alberta.
\end{abstract}

\tableofcontents

\section{Introduction}

The Hidden Markov Model (HMM) was and still is 
a powerful tool for modelling diverse datasets. 
Its inception dates to the work of Leonard 
Baum and his colleagues at the Institute for
Defense Analysis 
\citep{BAUM1966,BAUM1970} predating the 
advent of the expectation-maximization algorithm
\citep{DEMPSTER1977,wu1983convergence}.
Originally developed for observations taking 
values within a finite alphabet,
its applications included speech 
recognition and genome sequences
\citep{FERGUSON1980,PORITZ1986,JUANG1985,JUANG1991}. 
Since then, it has been extended to many types of data
including multivariate elliptically symmetric 
distributions \citep{LIPORACE1982} and  skewed 
distributions \citep{CHATZIS2010}.

When considering an ordered sequence of data
points, the HMM assumes that the sequence
of observed data is independent conditional
on a discrete state variable driven by a 
Markov chain.  For example, the daily 
number of people taking public transit 
to work (observation) may depend on the
weather (state).  If we consider two weather
states, snowy and sunny, we can model the 
day-to-day changes in the weather as a 
Markov chain with a $2\times2$ transition
matrix.  In this example, the weather is 
obviously observable.  However, the true 
power of the HMM is to learn hidden states
that may not be immediately obvious.  In this
way, the HMM aims to cluster observations,
but assuming only conditional independence
(as opposed to full independence) based on 
the states of a Markov chain.  An excellent
tutorial on HMMs can be found in 
\cite{RABINER1989} with many references to 
early work on this model therein.

The following work considers extending the observation
space of the Hidden Markov Model from Euclidean 
space to general locally convex topological 
vector spaces (LCTVS) specifically where the 
states of the Markov chain correspond to 
mean-shifted Gaussian measures.  As a result of
this general formulation, our so-called 
Topological Hidden Markov Model (THMM) can be 
adapted to many settings of interest from 
finite dimensional data to functional data 
and stochastic process data.  
In Section~\ref{sec:data}, we show the method's
applicability to a variety of simulated sample paths
from stochastic processes such as fractional Brownian 
motion and the Ornstein-Uhlenbeck process.
In Section~\ref{sec:pedOSA}, we apply these algorithms
to the task of identifying sleep states from 
an electroencephalogram (EEG) time series collected 
during overnight polysomnography.  
While this data is considered as a proof-of-concept
for the methodology,
the ultimate goal for subsequent research
is to be able to quickly identify sleep disorders
such as Pediatric Obstructive Sleep Apnea.
Section~\ref{sec:snowfall} applies both the classic HMM
and the new THMM algorithms to cumulative snowfall
curves from the city of Edmonton, Alberta with an 
aim of identifying patterns in snowfall across a 
50 year timespan.

In infinite dimensional spaces, there is no 
analogue of 
Lebesgue measure and thus no probability densities
and likelihood function to maximize.
This is the central challenge when working with 
functional and stochastic process data.
Our main innovation in this work is use the 
Onsager-Machlup functional for a general Gaussian 
measure \citep{BOGACHEV1998} within the
HMM emission function.  This allows for fitting
HMMs to data living in a wide variety of spaces such as 
$L^2[0,1]$ and the Sobolev space $W_0^{2,1}[0,1]$ among 
others.  There have been many past works on deriving the
Onsager-Machlup functional for various Gaussian processes
that we will make use of
\citep{TAKAHASHI1981,ZEITOUNI1989,SHEPP1992,CAPITAINE1995,IKEDA2014}.

There have been a few recent ventures taking the 
classic HMM into the realm of functional data.
In \cite{MARTINO2020}, they focus on multivariate 
functional data and choose the emission function to 
be the inverse squared $L^2$ distance.
The work of \cite{SIDROW2021} proposed a more complicated
hierarchical HMM model that can be used to partition 
high frequency functional data with complicated dependence 
structures so that the pieces can in turn be modelled 
via time series or functional data methods.
For analyzing longitudinal data, \cite{ALTMAN2007} proposes
a mixed HMM that incorporates covariates and random effects 
into the model.

The following Sections~\ref{sec:notation} and~\ref{sec:standardHMM}
introduce notation for locally convex spaces, semi-norms, 
Cameron-Martin spaces, and the classic Hidden Markov Model.
Section~\ref{sec:onsager} defines the Onsager-Machlup functional.
The two main algorithms, Baum-Welch and Viterbi, are outlined
in Section~\ref{sec:algorithm}.
Section~\ref{sec:interestingSpaces} details many 
specific models of interest including parametric
models like Brownian motion with linear drift and non-parametric
curve fitting.
Theorems justifying the validity of the THMM are 
stated and proved in Section~\ref{sec:theory}.  We
first prove that each step of the algorithm does, 
in fact, improve the analogue of the likelihood function
which is based on the Onsager-Machlup
functional.  We secondly prove that the sequence of 
reestimated parameters produced by the Baum-Welch algorithm
has at least one limit point and that all limit points 
of the sequence are critical points of the likelihood function.
Lastly, we show that in the dual sense, the corresponding
sequence of mixtures of Gaussian measures has a weak limit point
as the number of iterations of the algorithm tends to 
infinity.
Section~\ref{sec:data} demonstrates the power of the 
THMM applied to a variety of simulated datasets.
Section~\ref{sec:pedOSA} further demonstrates the power
of the THMM by showing its efficacy in identifying 
sleep states from noisy pediatric EEG data streams.
Section~\ref{sec:snowfall} considers climatological
data.
Future extensions to this work are briefly 
discussed in Section~\ref{sec:discussion}.

\subsection{Definitions and Notation}
\label{sec:notation}

A locally convex topological vector space (LCTVS)
generalizes normed spaces and can be constructed in 
a few equivalent ways.  In this work, we define our
LCTVS's via a family of seminorms.

\begin{definition}[Seminorm]
  Let $X$ be a real vector space, then 
  $q:X\rightarrow\real$ is a seminorm if 
  it satisfies the following properties.
  \begin{enumerate}
      \item $q(x)\ge0$ for all $x\in X$.
      \item $q(cx) = \abs{c}q(x)$ for all
        $x\in X$ and $c\in\real$.
      \item $q(x_1+x_2)\le q(x_1)+q(x_2)$.
  \end{enumerate}
\end{definition}

\begin{definition}[Locally Convex Space]
  A real vector space $X$ with $\mathcal{Q}$,
  a collection of seminorms, is said to be a 
  locally convex topological vector space.
  The seminorms in $\mathcal{Q}$ induce a 
  topology on $X$, which is the coarsest topology
  such that the $q\in\mathcal{Q}$ are continuous.
\end{definition}

An example of a LCTVS is $X = C_0([0,1],\real)$, the
space of continuous functions $x:[0,1]\rightarrow\real$,
$x(0)=0$
with $q_{\tau}(x) = \abs{x(\tau)}$ for $\tau\in[0,1]$.
More generally, we can consider $X^*$, the space of
linear functionals
on $X$, and take $q_f(x) = \abs{f(x)}$ for any $x\in X$
and $f\in X^*$.  Thus, we have a TVS $X$ with the weak
topology.

The focus of this work is on Gaussian measures 
on a LCTVS $X$. We say that $\gamma$ is a 
Gaussian measure defined on the cylindrical 
$\sigma$-field $\mathcal{E}(X)$ if the induced
measure $\gamma \circ f^{-1}$ on $\real$ is 
Gaussian for all $f\in X^*$.
We furthermore take $\gamma$ to be a Radon Gaussian 
measure, which is still sufficiently general 
for most applications of interest.  
\begin{defn}[Radon Measure]
  A measure $\mu$ defined on the Borel $\sigma$-field $\mathcal{B}(X)$
  for a topological space $X$ is Radon if 
  for every $B\in\mathcal{B}(X)$ and every $\veps>0$, there exists
  a compact set $K_\veps\subset B$ such that $\mu(B\setminus K_\veps)<\veps$.
\end{defn}
In the case that $X$ is a separable \frechet{} space,
the cylindrical $\sigma$-field coincides with the Borel
$\sigma$-field and furthermore every Borel measure is 
Radon;
see \cite{BOGACHEV1998}, Theorems~A.3.7 and~A.3.11.

We define similarly to \cite{BOGACHEV1998} the following
terms.  For a locally convex space $X$, $X^*$ is the topological 
dual space consisting of continuous linear functionals.  
The mean of $\gamma$ is $a_\gamma(f) = \int_X f(x)\gamma(dx)$ 
with $a_\gamma\in X^{**}$.  The covariance operator is
$R_\gamma: X^* \rightarrow X^{**}$ defined by 
$$
  R_\gamma(f)(g) = \int_X
  \left[f(x)-a_\gamma(f)\right]\left[g(x)-a_\gamma(g)\right]\gamma(dx).
$$
$X^*_\gamma$ is the closure of 
$\left\{f - a_\gamma(f)\,:\, f\in X^* \right\}$ embedded into $L^2(\gamma)$.
The Cameron-Martin space is
$$
 H(\gamma) = \left\{
   h \in X \,:\, \abs{h}_{H(\gamma)} < \infty
 \right\}
$$
where the norm is 
$\abs{h}_{H(\gamma)} = \sup\{l(h) \,:\, l\in X^*, R_\gamma(l)(l)\le1\}$.
Lemma~2.4.1 of \cite{BOGACHEV1998} proves that if some $h\in X$ is
in $H(\gamma)$, then there is a $h^*\in X_\gamma^*$ such that
$h = R_\gamma(h^*)$ with $\abs{h}_{H(\gamma)} = \norm{h^*}_{L^2(\gamma)}$.
Furthermore,
$\iprod{h}{k}_{H(\gamma)} = \iprod{h^*}{k^*}_{L^2(\gamma)}$.
Lastly, if $R_\gamma(X_\gamma^*) \subset X$, then 
$H(\gamma)=R_\gamma(X_\gamma^*)$ and 
$\abs{R_\gamma(f)}_{H(\gamma)} = \sqrt{ R_{\gamma}(f)(f) }$
where the operator $R_{\gamma}$ is 
extended to $X^*_\gamma$ as follows: 
$$
  R_{\gamma}: X_\gamma^*\to X^{**},~
  R_\gamma(f)(g) = \int_X
  f(x)\left[g(x)-a_\gamma(g)\right]\gamma(dx).
$$
This is necessarilly true for Radon Gaussian measures
\citep[Theorem 3.2.3]{BOGACHEV1998}.

\subsection{The Classic Hidden Markov Model}
\label{sec:standardHMM}

For the standard HMM, we begin with an observation 
sequence $O_1,\ldots,O_T$ that lives in some space $X$,
which could be a finite space $\{1,2,\ldots,d\}$ or 
Euclidean space $\real^d$ or otherwise.
In tandem, there is a hidden state sequence 
$s = (s_1,\ldots,s_T)$ where $s_t\in S = \{1,2,\ldots,p\}$,
which evolves as a $p$-state Markov chain with 
initial state probabilities 
$\eta_j = \prob{s_1 = j}$ and 
$p\times p$ transition matrix $A$ with $ij$th entry
$a_{ij} = \prob{ s_{t+1}=j \,|\, s_t = i }$, which is
assumed to be invariant to choice of $t=1,\ldots,T-1$.
However, nonhomogeneous Markov models with time varying
state transition probabilities have also been developed
\citep{HUGHES1999}.
Furthermore, there exist state dependent \textit{emission} 
functions $b_j:X\rightarrow\real^+$ for $j=1,\ldots,p$,
which assign a value to each observation $O_t$ based 
on being emitted from each potential state $s_t=j$.  
In the case of, say, multivariate Gaussian data in $\real^p$,
$b_j$ is simply the $d$-dimensional probability density 
function with state dependent mean vector and covariance 
matrix.
In the classic HMM, it is assumed that the observation 
sequence is comprised of $T$ elements that are independent
conditionally on the state sequence.  This assumption 
is removed in more complex variants of the HMM such 
as the autoregressive HMM discussed in \cite{RABINER1989}
and others.

The  Baum-Welch algorithm offers an efficient way 
to estimate the unknown parameters in the HMM that 
maximize the likelihood function 
$$
  L( O_1,\ldots,O_T \,|\, \lmb ) = 
  \sum_{s\in {S}^T} \eta_{s_1}
  \prod_{t=1}^T a_{s_ts_{t+1}} b_{s_t}(O_t)
$$
where the summation is taken over
${S}^T = \{1,\ldots,p\}^T$, the space of 
all $p^T$ state sequences,
and $\lmb$ represents the collection of model parameters.
The crux of the Baum-Welch algorthm are the forward and 
backward probabilities
\begin{align*}
  \alpha_t(j) &= \prob{ O_1,\ldots,O_t,s_t=j\,|\,\lmb }\\
  \beta_t(j) &= \prob{ O_{t+1},\ldots,O_T\,|\,s_t=j,\lmb },
\end{align*}
which can be computed recursively as outlined below in 
Algorithm~\ref{algo:baumWelch}.
Estimation of the state means is achieved as a weighted sum
of the observations where the weights come directly from the
$\alpha$ and $\beta$ probabilities.  Namely,
we wish to find the state means that maximize the following sum
$
       \sum_{t=1}^T 
       \alpha_t(j)\beta_t(j)
       b_j(O_t)
$
for each state $j$.
Our implementation of Baum-Welch is very similar to the classic
version.  The main innovation is usage and 
justification of the Onsager-Machlup
functional for the emission functions $b_j$ and thus maintaining
state dependent means that live in the Cameron-Martin space
$H(\gamma)$.

\subsection{Onsager-Machlup Functional}
\label{sec:onsager}

For a Gaussian measure $\gamma$ on a metric space $X$,
we can consider the \textit{Onsager-Machlup functional}, which is
$$
  I(a,b) = \lim_{\veps\rightarrow0}
  \frac{\gamma( K(a,\veps) )}{\gamma( K(b,\veps) )},
  ~~a,b\in X
$$
where $K(a,\veps)$ is a closed ball of radius $\veps>0$ centred
at $a\in X$.  Hence, we consider the limit as the radii of the two
balls shrink to zero. 
For a locally convex space, \cite{BOGACHEV1998} introduces the
notation $V_\veps = \left\{ x\in X\,:\,q(x)\le\veps \right\}$
where $q$ is a seminorm.
We are interested in
the ratio
$$
  \frac{\gamma(V_\veps+h)}{\gamma(V_\veps)} = 
  \frac{\ee^{-\abs{h}_H^2/2}}{\gamma(V_\veps)}
  \int_{V_\veps} \ee^{{h^*}(x)}\gamma(dx)
$$
where ${h}^*\in X_\gamma^*$ is such that $h = R_\gamma{h^*}$.
Furthermore, we denote the integral
$$
  J_\veps(f) = \frac{1}{\gamma(V_\veps)}
  \int_{V_\veps} \ee^{f(x)}\gamma(dx)
$$
and 
$
  F_q = \{ 
    f\in X_\gamma^* \,:\, \lim_{\veps\rightarrow0}J_\veps(f)=1 
  \}
$
is a closed linear subspace of $X_\gamma^*$ 
(see Lemma~4.7.2 in \cite{BOGACHEV1998}).
Via Lemma~4.7.4 \citep{BOGACHEV1998}, we can 
also define $F_q$ as $R_\gamma^{-1}Z^\perp$ where
$Z = \{ a\in H \,:\, q(a)=0 \}$.  Lastly, 
$P_q: X_\gamma^* \rightarrow F_q$ is an orthogonal projection.

Given this setup, the Onsager-Machlup function is
(see Corollary~4.7.8 \cite{BOGACHEV1998})
$$
  \lim_{\veps\rightarrow0}
  \frac{\gamma(V_\veps+h)}{\gamma(V_\veps+k)}
  = \exp\left(
    \frac{1}{2}\abs{\pi_qk}_H^2 - \frac{1}{2}\abs{\pi_qh}_H^2
  \right)
$$
for $h,k\in H(\gamma)$ where $\pi_q$ is the orthogonal 
projection onto $Z^\perp$, which can be written as 
$\pi_q = R_\gamma P_q R_\gamma^{-1}$.
We will use this for the emission function within the
HMM framework.  Namely, we choose
$$
  b_j(O_t) = 
  \lim_{\veps\rightarrow0}
  \frac{\gamma(V_\veps+\{O_t-h_j\})}{\gamma(V_\veps)}
  = \exp\left(
    -\frac{1}{2}\abs{\pi_q\{O_t-h_j\}}_H^2
  \right).
$$
Here, we require $O_t-h_j\in H(\gamma)$,
which may necessitate a modification of $O_t$
such as application of a kernel smoother.

\section{THMM Algorithms}
\label{sec:algorithm}

Borrowing the notation 
from classic works on HMMs 
\citep{LIPORACE1982,RABINER1989}, the goal for fitting an HMM 
to an observation sequence
$O = (O_1,\ldots,O_T)$
is to find the 
model parameters and the state
sequence $s$ that maximize the likelihood
$L(O\,|\,s) = \prod_{t=1}^T b_{s_t}(O_t)$.
The task of choosing the best parameters 
is achieved via the Baum-Welch algorithm.  
Determining the best state sequence is 
done by the Viterbi algoirthm.  These 
are detailed in Algorithms~\ref{algo:baumWelch}
and~\ref{algo:viterbi}, respectively, 
which differ only from their original 
instantiations in the choice of 
emission function $b_j$ and method of 
reestimation for the state means $h_j\in H(\gamma)$
for $j=1,\ldots,p$.

\subsection{Baum-Welch}
\label{sec:baumWelch}

\begin{algorithm}[t]
	\caption{
		\label{algo:baumWelch}
		The Baum-Welch Algorithm
	}
	\begin{tabbing}
	    \hspace{0.6\textwidth}\=\kill
		\qquad \enspace \bf Initialize model parameters:\\
		\qquad\qquad
		  $\eta_j = \prob{ s_1 = j }, \text{for }j=1,\ldots,p$
		  \>(Initial probabilities)\\
		\qquad\qquad
          $a_{ij} = \prob{ s_{t+1} = j \,|\, s_t = i }, \text{for }i,j=1,\ldots,p$
          \>(Transition probabilities)\\
        \qquad\qquad
          $h_j\in H(\gamma)$ for $j=1,\ldots,p$
          \>(Center element for each state)\\
        \qquad\enspace \bf
          Iterate until convergence:\\
        \qquad\qquad \enspace \bf Forward Pass:\\
        \qquad\qquad \qquad $\alpha_1(j) = \eta_j b_j(O_1)$\\
        \qquad\qquad \qquad For $t = 2,\ldots,T$:\\
        \qquad\qquad\qquad\qquad $\alpha_t(j) = 
            \prob{
              O_1,\ldots,O_t, s_t = j \,|\, \eta, A, h
            } = 
          \sum_{i=1}^p \alpha_{t-1}(i)a_{ij}b_j(O_t)$\\
        \qquad\qquad \enspace \bf Backward Pass:\\
        \qquad\qquad \qquad $\beta_T(j) = 1$\\
        \qquad\qquad \qquad For $t = (T-1),\ldots,1$:\\
        \qquad\qquad\qquad\qquad $\beta_t(j) =
            \prob{
              O_{t+1},\ldots,O_T \,|\, s_t = j, \eta, A, h
            } =
	        \sum_{i=1}^p \beta_{t+1}(i)a_{ji}b_i(O_{t+1})$\\
	    \qquad\qquad \enspace \bf Reestimation:\\
	    \qquad\qquad\qquad
	      $
	        \gamma_t(i) = \prob{
              s_t = i \,|\, O, \eta, A, h
            }
            = \frac{
              \alpha_t(i)\beta_t(i)
            }{
              \sum_{j=1}^p \alpha_t(j)\beta_t(j)
            }
        $\\
        \qquad\qquad\qquad
        $
        \xi_t(i,j) = \prob{
          s_t = i, s_{t+1}=j \,|\,
          O, \eta, A, h
        } 
        = \frac{
          \alpha_t(i) a_{ij} b_j(O_{t+1})\beta_{t+1}(j)
        }{
          \sum_{i',j'=1}^p 
          \alpha_t(i')a_{i'j'}b_{j'}(O_{t+1})\beta_{t+1}(j')
        }
        $\\
        \qquad\qquad\qquad
        $\tilde{\eta}_j  = \gamma_1(j)$\\
        \qquad\qquad\qquad
        $
          \tilde{a}_{ij} = \frac{
            \sum_{t=1}^{T-1} \xi_t(i,j)
          }{
            \sum_{t=1}^{T-1} \gamma_t(i)
          }
        $\\
        \qquad\qquad\qquad
        $h_j = \argmax{h\in H(\gamma)} 
          \sum_{t=1}^T 
          \alpha_t(j)\beta_t(j)
          b_j(O_t)$ where 
          $b_j(O_t)$ depends on $h$.
	\end{tabbing}
\end{algorithm}

The Baum-Welch Algorithm
\citep{BAUM1966}, detailed in 
Algorithm~\ref{algo:baumWelch}, takes a form 
similar to that of an Expectation-Maximization 
algorithm.  However, the Baum-Welch algorithm 
predates the EM algorithm \citep{DEMPSTER1977} 
by about a decade.
For more classic references on HMMs, see those
within \cite{RABINER1989}.

The Baum-Welch algorithm works by computing so-called
forward (alpha) and backward (beta) probabilities
given the model parameters.  It then reestimates the
the model parameters based on these probabilities.
At each iteration, the likelihood increases and the
algorithm is run until the relative change in the 
likelihood becomes minuscule. 
Given the alpha probabilities, the log likelihood is 
simply computed as 
$\ell(O) = \sum_{j=1}^p \log \alpha_T(j)$.
Denoting the log likelihood at iteration $r$ to be 
$\ell_r$, we stop the algorithm when the relative
change in the log likelihood, $(\ell_{r+1}-\ell_r)/\ell_{r+1}$,
is less than a user specified tolerance, say, $10^{-6}$.
In practice, all of the terms in Algorithm~\ref{algo:baumWelch}
are computed on the log-scale to avoid numerical 
stability issues.

Reestimation of the means is performed by
$$
h_j = \argmax{h\in H(\gamma)} 
  \sum_{t=1}^T 
  \alpha_t(j)\beta_t(j)
  b_j(O_t)
$$
where the emission probabilities, $b_j(O_t)$, depend on 
choice of state mean $h_j$.  How this equation is used
depends on the type of model being fitted.  Many specific
examples are considered below in Section~\ref{sec:interestingSpaces}.

As with both the classic HMM and EM-style algorithms,
the choice of initialization parameters can drastically
affect the performance.
In particular, we require each state to have a mean 
$h_j\in H(\gamma)$, which furthermore lies within the
convex hull of the observations $O_1,\ldots,O_T$.  This
will be discussed more in Section~\ref{sec:theory}.
For parametric settings, we estimate the parameters
for each $O_t$ and then pick random starting values 
roughly spread out in this convex set.  Alternatively, 
in the non-parametric setting, we can either 
randomly select a single $O_t$ for each state to start
with or we can run a quick k-means clustering to 
automatically choose the starting state means.

\subsection{Viterbi}
\label{sec:viterbi}

\begin{algorithm}[t]
	\caption{
		\label{algo:viterbi}
		The Viterbi Algorithm
	}
	\begin{tabbing}
		\qquad \enspace Initialize
		$\delta_1(j) = \eta_j b_j(O_1) \text{ and }
         \phi_1(j) = 0 \text{ for all }i=1,\ldots,p$.\\
		\qquad \enspace 
		For $t=2,\ldots,T$, compute the following 
		for all $j=1,\ldots,p$\\
		\qquad\qquad $
		  \delta_t(j) = \max_{i=1,\ldots,p}\left\{ 
            \delta_{t-1}(i)a_{ij} 
          \right\}b_j(O_t)
        $\\
        \qquad\qquad $
          \phi_t(j) = \argmax{i=1,\ldots,p}\left\{
            \delta_{t-1}(i)a_{ij}
          \right\}
        $\\
        \qquad\enspace
        The final state is $\hat{s}_T =
        \arg\max_{i=1,\ldots,p}\delta_T(i)$.\\
        \qquad \enspace 
		For $t=(T-1),\ldots,1$, compute\\
        \qquad\qquad
        $
         \hat{s}_t = \phi_{t+1}(\hat{s}_{t+1}).
        $
	\end{tabbing}
\end{algorithm}

Given an observation sequence $O_1,\ldots,O_T$,
a state space, initial probabilities $\eta_j$, 
transition probabilities $a_{ij}$, and
emission probabilities $b_j(O_t)$, 
the Viterbi algorithm \citep{VITERBI1967} 
finds the most
probable state sequence;  
see \cite{RABINER1989} and 
references therein for more details.

Let 
$
 \delta_t(j) = 
 \max_{(s_1,...,s_{t-1})\in S^{t-1}} P(s_1,...,s_{t-1},s_t=j)
$
be the highest probability
of any state sequence from $1$ to $t$ such that 
the state at time $t$ is $j$.
The probability of the most likely state sequence 
ending in state $s_t=j$ can be computed in a 
recursive fashion by maximizing over $s_{t-1}$ 
at each time step.
Let 
$\phi_t(j)$ be the state at time $t-1$ 
that maximizes $\delta_t(j)$.
The goal of this algorithm is to compute the 
\textit{best} state sequence denoted 
$\hat{s}_1,\ldots,\hat{s}_T$.
The Viterbi algorithm is detailed in 
Algorithm~\ref{algo:viterbi}.

\section{Spaces of Interest}
\label{sec:interestingSpaces}

\subsection{Euclidean Space}
\label{sec:OMEuclid}

For the simplest setting,
we can consider 
$O_t\in\real^d$ and each state $s_t=j$ 
corresponding to a multivariate
Gaussian distribution with mean $\mu_j$ and 
common covariance $\Sigma$.  In this case,
the Cameron-Martin norm is 
$
  \abs{h}_{H(\gamma)} = 
  \sup\{
    \TT{v}h \,:\, \TT{v}\Sigma v \le 1
  \}
$
where $\xv(v) =0$ and $\var{v} = \TT{v}\Sigma v$.  
This leads to 
$
  \abs{h}_{H(\gamma)} = 
  \sqrt{ \TT{h}\Sigma^{-1}h }.
$
The corresponding Cameron-Martin inner product is
$
  \iprod{h}{k}_{H(\gamma)} = 
  { \TT{h}\Sigma^{-1}k }.
$
Thus, the log-emission function is 
$\log b_j(O_t) = -\frac{1}{2}\TT{(O_t-h)}\Sigma^{-1} (O_t-h)$
and the reestimated mean vector is
$$
  \tilde{m}_j = \argmin{h\in \real^d} 
  \sum_{t=1}^T 
  \alpha_t(j)\beta_t(j)
  \TT{(O_t-h)}\Sigma^{-1} (O_t-h),
$$
which can be simply solved via vector calculus to get
$$
  \tilde{m}_j = \frac{
    \sum_{t=1}^T \alpha_t(j)\beta_t(j)O_t
  }{
    \sum_{t=1}^T \alpha_t(j)\beta_t(j)
  }.
$$ 
This formulation coincides with the classic 
HMM for multivariate Gaussian data when the 
covariance matrix is fixed.  Our current 
formulation of the HMM using the Onsager-Machlup
functional requires a fixed covariance structure 
across all system states.  However, as we will 
show below, this still allows the THMM to
model many diverse types of data.  

Theoretical justification for the reestimated 
means can be found in Lemma~\ref{lem:meanreest}.
In this and the following specific cases, the
reestimated means take the form of weighted 
averages of the observed data points.

\subsection{Wiener Space with Smooth Norms}
\label{sec:OMWiener}

In a collection of articles and texts
\citep{TAKAHASHI1981,ZEITOUNI1989,SHEPP1992,CAPITAINE1995,IKEDA2014},
the Onsager-Machlup functional
is derived for the diffusion process 
$$
  dY_\tau = r(Y_\tau) d\tau + dW_\tau,~~
  Y_0 = y, ~~ Y_\tau \in \real^d, ~~~\tau\in[0,1],
$$
where $r:\real^d\rightarrow\real^d$ is a smooth function 
and $W_\tau$ is $d$-dimensional Brownian motion.
That is,
$$
	\log\left[
	  \lim_{\veps\rightarrow0} 
	  \frac{\prob{\norm{Y-\Phi}<\veps}}{\prob{\norm{W}<\veps}}
	\right]
	=
	-\frac{1}{2} \sum_{k=1}^d \int_0^1
	\abs*{ \dot{\Phi}_{k,\tau} -  r_k(\Phi_\tau) }^2 d\tau
	-\frac{1}{2}\sum_{k=1}^d\int_0^1 
	\frac{\partial r_k}{\partial y_k}(\Phi_\tau)d\tau
$$
where $\dot{\Phi}_\tau = d\Phi_\tau/d\tau$.
\cite{IKEDA2014} prove the above for the sup-norm
and $\Phi\in C^2$, the space of twice differentiable
functions.
\cite{SHEPP1992} extended this to all $\Phi$ such 
that $\Phi-y\in H(\gamma)$ and $L^p$ norms for 
$p\ge4$ and \holder{} norms with $0<\alpha<1/3$.
\cite{CAPITAINE1995} further shows that this result
holds for a wide class of smooth norms on Wiener 
space including \holder{} norms with $0<\alpha<1/2$,
Besov norms, and Sobolev norms.

\subsubsection*{Brownian Motion with Drift}

Many common stochastic processes arise from choices 
of $r$ 
(see \cite{PAVLIOTIS2014} Section~5.3).
For example, in the case of one-dimensional Brownian motion 
with state $j$ drift coefficient $c_j$, the diffusion
equation is $dY_{\tau}=c_j d\tau+dW_{\tau}$, and
the Onsager-Machlup functional / log-emission function
is 
$$
  \log b_j(O_{t,\tau}) = 
  \log\left[
    \lim_{\veps\rightarrow0} 
    \frac{\prob{\norm{W_\tau+c_j\tau-O_{t,\tau}}<\veps}
    }{\prob{\norm{W_\tau}<\veps}}
  \right]
  =
  -\frac{1}{2} \int_0^1
  \abs*{ \dot{O}_{t,\tau} -  c_j }^2 d\tau.
$$
The observations $O_{t,\tau}$ should be projected into the 
Cameron-Martin space $H(\gamma)$ allowing for differentiation.
In practice, smoothing methods may be required.
The drift terms can be reestimated within the Baum-Welch 
algorithm by
$$
  \tilde{c}_j = \argmin{h\in \real} 
  \sum_{t=1}^T \alpha_t(j)\beta_t(j)
  \left\{
    \frac{1}{2} \int_0^1 \abs*{ \dot{O}_{t,\tau} -  h }^2 d\tau
  \right\}.
$$
This leads to the weighted least squares estimate
$$
  \tilde{c}_j = 
  \frac{
  	\sum_{t=1}^T \alpha_t(j)\beta_t(j)
  	[O_{t,1} - O_{t,0}]
  }{
    \sum_{t=1}^T \alpha_t(j)\beta_t(j)
  }.
$$

Restricting to a one-dimensional drift parameter
is convenient for exposition, but not necessary
in practice.  We could instead consider
$$
  dY_\tau = r(t) d\tau + dW_\tau,~~
  Y_0 = y, ~~ Y_\tau \in \real^d, ~~~\tau\in[0,1],
$$
where $r(t) = \sum_{i=1}^m c_{i}\psi_i(t)$ for
some functional basis $\psi_1,\ldots,\psi_m$.
Thus, we would have $m$ parameters to estimate.

\subsubsection*{Ornstein-Uhlenbeck Process}

In the case of the one-dimensional 
Ornstein-Uhlenbeck (OU) process,
$dY_{\tau}=c_j(\mu_j-Y_{\tau})d{\tau}+dW_{\tau}$, 
the Onsager-Machlup functional is 
\begin{multline*}
    \log b_j(O_{t,\tau}) =
	\log\left[
	  \lim_{\veps\rightarrow0} 
	  \frac{
	    \prob{\norm{W_\tau+c_j(\mu_j-Y_\tau)-O_{t,\tau}}<\veps}
	  }{
	    \prob{\norm{W}<\veps}
	  }
	\right]\\
	=
	-\frac{1}{2}  \int_0^1
	\abs*{ \dot{O}_{t,\tau} -  c_j(\mu_j-O_{t,\tau}) }^2 d\tau
	+\frac{c_j}{2},
\end{multline*}
which leads to the following parameter 
reestimation:
\begin{align*}
  (\tilde{c}_j,\tilde{\mu}_j) &= \argmin{h\in \real^+,k\in\real} 
  \sum_{t=1}^T \alpha_t(j)\beta_t(j)
  \left\{
    \frac{1}{2}  \int_0^1
	\abs*{ \dot{O}_{t,\tau} -  h(k-O_{t,\tau}) }^2 d\tau
	-\frac{h}{2}
  \right\}.
\end{align*}
The derivative with respect to $k$ inside the curly brackets
gives similarly to the above Brownian motion that
\begin{align*}
  \tilde{\mu}_j &= 
  \frac{
	\sum_{t=1}^T \alpha_t(j)\beta_t(j)
	\int_0^1
	\{ h\dot{O}_{t,\tau} +  h^2O_{t,\tau} \} d\tau
  }{
	\sum_{t=1}^T \alpha_t(j)\beta_t(j)h^2
  }\\
  &= 
  \frac{
	\sum_{t=1}^T \alpha_t(j)\beta_t(j)
	\{ {O}_{t,1}-{O}_{t,0} 
	+  h\int_0^1 O_{t,\tau}d\tau \} 
  }{
	\sum_{t=1}^T \alpha_t(j)\beta_t(j)h
  }.
\end{align*}
The derivative with respect to $h$ inside the curly brackets
gives
\begin{multline*}
  \int_{0}^1 \left[
    \dot{O}_{t,\tau} -  h(k-O_{t,\tau})
  \right](O_{t,\tau}-k)
  d\tau
  - \frac{1}{2} 
  \\ = 
    -k( {O}_{t,1}-{O}_{t,0} )
    + \int_{0}^1\dot{O}_{t,\tau}O_{t,\tau}d\tau 
    + h\int_{0}^1(k-O_{t,\tau})^2d\tau
  - \frac{1}{2}. 
\end{multline*}
Thus,
$$
  \tilde{c}_j = 
  \frac{
  	\sum_{t=1}^T \alpha_t(j)\beta_t(j)\left[
  	  \frac{1}{2}+k( {O}_{t,1}-{O}_{t,0} )
      - \int_{0}^1\dot{O}_{t,\tau}O_{t,\tau}d\tau 
  	\right]
  }{
    \sum_{t=1}^T \alpha_t(j)\beta_t(j)
    \left[
      \int_{0}^1(k-O_{t,\tau})^2d\tau
    \right]
  }.
$$
In practice, we reparametrize the OU process
to stabilize this nonlinear optimization over the two parameters,
which is discussed in Section~\ref{sec:OUP}.

\subsection{Fractional Brownian Motion}
\label{sec:OMFractBM}

In \cite{MORET2002}, the Onsager-Machlup functional is
derived for fractional Brownian motion in the 
\textit{Singular Case}, which is where the Hurst
parameter is $\frac{1}{4}<\nu<\frac{1}{2}$, and
the \textit{Regular Case} where the Hurst parameter
is $\nu>\frac{1}{2}$.  Note that for $\nu=\frac{1}{2}$,
we have standard Brownian motion.  The process considered
is
$$
  Y_\tau = y + W_\tau^\nu + \int_0^\tau r(Y_s)ds
$$
where $r\in C_b^2(\real)$, the space of bounded functions with two continuous
derivatives.

The Onsager-Machlup functional for the singular
case from Theorem~7 in \cite{MORET2002} is 
$$
	\log\left[
	\lim_{\veps\rightarrow0} 
	\frac{\prob{\norm{Y-\Phi}<\veps}}{\prob{\norm{W^\nu}<\veps}}
	\right]
	=
	-\frac{1}{2}\int_0^1\left\{
	  \dot{\Phi}_\tau - \tau^{-\upsilon}I_{0^+}^\upsilon 
	  \tau^\upsilon r(\Phi_\tau)
	\right\}^2d\tau
	-\frac{1}{2} d_\nu\int_0^1 r'(\Phi_\tau)d\tau
$$
where $K^\nu\dot{\Phi} = \Phi - x$, $\upsilon = \abs{\nu-1/2}$,
$K^\nu$ is the operator such that $dW_t^\nu = K^\nu(t,s)dW_s$,
$$
  d_\nu = \sqrt{
  	\frac{2\nu\Gamma(3/2-\nu)\Gamma(\nu+1/2)}{\Gamma(2-2\nu)}
  },
$$
and 
$
  I_{a^+}^\upsilon f(x) = \Gamma(\upsilon)^{-1}\int_a^x
  (x-y)^{\upsilon-1}f(y)dy
$
is called the \textit{left fractional Riemann-Liouville Integral}.

In the simplest non-trivial setting of 
$r = c\in\real$ and fixed 
$\upsilon\in(0,1/4)$, we aim to solve for the drift term
$c$ such that
$$
  \tilde{c}_j = \argmin{h\in \real} 
  \sum_{t=1}^T \alpha_t(j)\beta_t(j)
  \left\{
    \frac{1}{2}\int_0^1\left\{
	  \dot{\Phi}_\tau - h\tau^{-\upsilon}I_{0^+}^\upsilon 
	  \tau^\upsilon 
	\right\}^2d\tau
  \right\}.
$$
In this case, the integral $I_{0^+}^\upsilon\tau^\upsilon$
is a scaled Beta function and
$
  \tau^{-\upsilon}I_{0^+}^\upsilon\tau^\upsilon = 
  \tau^{\upsilon}\Gamma(\upsilon+1)/\Gamma(2\upsilon+1).
$
Thus, some simple calculus results in 
$$
  \tilde{c}_j =
  \frac{\Gamma(2\upsilon+2)}{\Gamma(\upsilon+1)}
  \frac{
  	\sum_{t=1}^T \alpha_t(j)\beta_t(j)
  	\int_{0}^1 \tau^\upsilon \dot{O}_{t,\tau} d\tau
  }{
    \sum_{t=1}^T \alpha_t(j)\beta_t(j)
  }.
$$
Setting $\upsilon=0$ in the above returns us to the 
formula for $\tilde{c}_j$ derived from Brownian motion
with drift.

Similarly, the Onsager-Machlup functional for the regular
case, $\nu>1/2$, from Theorem~8 in \cite{MORET2002} is 
$$
	\log\left[
	\lim_{\veps\rightarrow0} 
	\frac{\prob{\norm{Y-\Phi}<\veps}}{\prob{\norm{W^\nu}<\veps}}
	\right]
	=
	-\frac{1}{2}\int_0^1\left\{
	  \dot{\Phi}_\tau - \tau^{\omega}D_{0^+}^{\omega} 
	  \tau^{-\omega} r(\Phi_\tau)
	\right\}^2d\tau
	-\frac{1}{2} d_\nu\int_0^1 r'(\Phi_\tau)d\tau
$$
for $\omega = \nu - 1/2$ and $D_{0^+}^\omega$
is the left-sided Riemann-Liouville derivative
defined as 
$$
  D_{a^+}^\omega f(x) = \frac{1}{\Gamma(1-\omega)}
  \frac{d}{dx}\int_{a}^x \frac{f(y)}{(x-y)^{\omega}}dy.
$$
If we consider the linear drift setting of 
$r = c \in\real$, then 
$
  \tau^{\omega}D_{0^+}^\omega \tau^{-\omega}
  = (1-2\omega)\tau^{-\omega}
  \Gamma(1-\omega)/\Gamma(2-2\omega).
$
A little more calculus gives us
$$
  \tilde{c}_j =
  \frac{\Gamma(2-2\omega)}{\Gamma(1-\omega)}
  \frac{
  	\sum_{t=1}^T \alpha_t(j)\beta_t(j)
  	\int_{0}^1 \tau^{-\omega} \dot{O}_{t,\tau} d\tau
  }{
    \sum_{t=1}^T \alpha_t(j)\beta_t(j)
  },
$$
which coincides nicely with the singular case 
as $-\omega = \upsilon$.

Applying our THMM algorithm to fractional Brownian
motion extends the range of possible stochastic 
processes we can consider.  When the Hurst parameter
$\nu>1/2$, the process has positively correlated 
increments and thus appears smoother than standard
Brownian motion.  In comparison, processes with 
$\nu<1/2$ exhibit negatively correlated increments
and thus appear rougher than standard Brownian
motion.

\subsection{Non-Parametric State Means}
\label{sec:nonpar}

The previous sections consider estimation of 
specific real valued parameters under different 
stochastic models.  
However, a more flexible 
approach is to treat estimation of the means 
non-parametrically.
This is achieved by using the formulae derived from the
Onsager-Machlup functional at the beginning of
Section~\ref{sec:onsager} directly. 

Indeed, if $q$ is a norm, then $\pi_q$ is the identity.
The emission function and mean 
for state $j$ are shown to be
$$
  b_j(O_t) = 
  \exp\left(
    -\frac{1}{2}\abs{O_t-h_j}_H^2
  \right) ~\text{ and }~
  h_j = \frac{
    \sum_{t=1}^T\alpha_t(j)\beta_t(j)O_t
  }{
    \sum_{t=1}^T\alpha_t(j)\beta_t(j)
  },
$$
respectively.  To see the latter equation, we note
that 
\begin{align*}
  \sum_{t=1}^T
  \alpha_t(j)\beta_t(j)
  \abs{ O_t-h_j }_H^2
  &= 
  \sum_{t=1}^T
  \alpha_t(j)\beta_t(j)
  R_\gamma( O_t^*-h_j^* )( O_t^*-h_j^* ) \\
  &= 
  \sum_{t=1}^T
  \alpha_t(j)\beta_t(j)
  \int_X ( O_t^*-h_j^* )^2 \gamma(dx),
\end{align*}
which is minimized by
$
  h_j^* = {
    \sum_{t=1}^T\alpha_t(j)\beta_t(j)O_t^*
  }/{
    \sum_{t=1}^T\alpha_t(j)\beta_t(j)
  }.
$
Applying the operator $R_\gamma$ to each side recovers
the optimal $h_j$.

In practice, one must select a suitable Gaussian measure
/ Cameron-Martin space for the problem at hand.
For example, in Section~\ref{sec:pedOSA}, we analyze a sequence of
Electroencephalogram (EEG) signals under the standard
Wiener measure.  In this setting, $H=W_0^{2,1}[0,1]$,
the Sobolev space of absolutely continuous functions
$h$ such that $h'\in L^2[0,1]$ with $h(0)=0$.
As a consequence, the emission function becomes 
$$
   b_j(O_t) = 
  \exp\left(
    -\frac{1}{2}\int_{0}^1 
    \abs*{\dot{O}_t(\tau)-\dot{h}_j(\tau)}^2
    d\tau
  \right).
$$
However, it is worth emphasizing that other 
Cameron-Martin norms can be considered and may 
improve performance of algorithm.

\section{Theoretical Guarantees}
\label{sec:theory}

\subsection{Reestimation and Maximum Likelihood}

In this section, we extend proofs from past
work \citep{LIPORACE1982,wu1983convergence} to show 
that the Baum-Welch and EM algorithms 
{satisfy nice convergence properties}.  
Given a finite sequence of observations
$O=\{O_t\}_{t=1}^T$, initial state probabilities
$\{\eta_j\}_{j\in S}$, a $p\times p$ Markov transition matrix $A$ 
with $ij$th entry $a_{ij}$, state means $\{m_j\}_{j\in S}$ and a state sequence $s = (s_1,\ldots,s_T)\in S^T$, we 
can define our analogue of the likelihood function
(refer to Section~\ref{sec:standardHMM} for the classic 
HMM setup)
to be 
$$
  L_{\lmb}(O) = \sum_{s\in S^T}
  \left(
    \eta_{s_1}\prod_{t=1}^T a_{s_{t-1}s_t}
  \right)
  \exp\left\{
    -\frac{1}{2}\sum_{t=1}^T\abs{
	  \pi_{q}(O_t-m_{s_t})
    }_H^2 
  \right\}
$$
where the sum is taken over all state sequences and  
${\lmb}=(\{{\eta}_j\}_{j\in S}$, $\{{a}_{ij}\}_{i,j\in S}$,
$\{{m}_j\}_{j\in S})$ denotes the selection of parameters. 
Let $\Lambda$ be the space of all possible parameters 
$\lambda$. The initial probabilities $\{{\eta}_j\}_{j\in S}$ 
and each row of $\{{a}_{ij}\}_{i,j\in S}$ lie in 
the $p-1$ simplex, i.e. the closed convex hull of 
the unit vectors in $\real^p$, which is compact. 
Furthermore, each $m_j$ for ${j\in S}$ lies in 
${H}_0$, 
a convex subset of the Cameron-Martin space $H(\gamma)$.
Thus, $\Lambda$ is a closed convex subset of $\mathbb{R}^p\times \mathbb{R}^{p\times p}\times H(\gamma)^p$.
For a specific state sequence $s\in S^T$, we write
$L_\lmb(O,s)$ to be the summand of $L_\lmb(O)$
for $s$.
The reestimation transformation 
$Q:\Lambda\times\Lambda\rightarrow\real$ is a bivariate 
function given by
\begin{align*}
  Q(\lmb,\tilde{\lmb})
  &= \sum_{s\in S^T}
  L_\lmb(O,s)\log L_{\tilde{\lmb}}(O,s)\\
  &= \sum_{s\in S^T}\left[
  L_\lmb(O,s)\left\{
    \log \tilde{\eta}_{s_1} + 
    \sum_{t=1}^T \log \tilde{a}_{s_{t-1}s_t} -
    \frac{1}{2}\sum_{t=1}^T\abs{  
	  {\pi}_{q}(O_t-\tilde{m}_{s_t})
    }_H^2
  \right\}
\right].
\end{align*}
The Baum-Welch algorithm is part of the family of 
majorize-minimization algorithms that optimize this function in an iterative fashion in order to obtain parameter estimates, instead of 
directly optimizing the likelihood.
Our goal in this section is to prove that maximizing
$Q(\lmb,\tilde{\lmb})$ over all $\tilde{\lmb}$ increases
the likelihood, i.e.  $L_{\lmb}(O) \le L_{\tilde{\lmb}}(O)$, and that the
the reestimation procedure stabilizes only at critical points 
of the likelihood. When talking about differentiability in 
this paper, we will always be referring to the \frechet{}
derivative. 

\begin{lemma}
The likelihood function $L(\lambda)=L_\lmb(O)$ and the reestimation function $Q$ are differentiable with respect to the state means.
\end{lemma}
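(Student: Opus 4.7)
The plan is to reduce both differentiability statements to a single observation: that for each fixed observation $O_t$, the map $\phi_t : H(\gamma) \to \real$ defined by $\phi_t(h) = \abs{\pi_q(O_t - h)}_H^2$ is \frechet{} differentiable on the Cameron-Martin Hilbert space. Once this is in hand, differentiability of $L$ and $Q$ in the state-mean coordinate $(m_1,\ldots,m_p) \in H(\gamma)^p$ follows by standard calculus: the coordinate projections $(m_1,\ldots,m_p)\mapsto m_{s_t}$ are bounded linear, finite sums and products of \frechet{} differentiable maps are \frechet{} differentiable, and post-composition with the smooth real function $\exp$ preserves differentiability.

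For the key step, I would invoke Lemma~4.7.4 of \cite{BOGACHEV1998} (quoted in Section~\ref{sec:onsager}) to conclude that $\pi_q = R_\gamma P_q R_\gamma^{-1}$ is the orthogonal projection of $H(\gamma)$ onto the closed subspace $Z^\perp$. Hence $\pi_q$ is bounded, self-adjoint, and idempotent on $H(\gamma)$. For $h,k\in H(\gamma)$ expansion yields
$$
\phi_t(h+k) \;=\; \phi_t(h) \;-\; 2\iprod{\pi_q(O_t-h)}{\pi_q k}_H \;+\; \abs{\pi_q k}_H^2 \;=\; \phi_t(h) \;-\; 2\iprod{\pi_q(O_t-h)}{k}_H \;+\; \abs{\pi_q k}_H^2,
$$
where the second equality uses self-adjointness and idempotence of $\pi_q$. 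The linear term in $k$ identifies the candidate \frechet{} derivative $D\phi_t(h)\cdot k = -2\iprod{\pi_q(O_t-h)}{k}_H$, and the remainder is bounded by $\abs{k}_H^2 = o(\abs{k}_H)$. This establishes \frechet{} differentiability of $\phi_t$ at every $h\in H(\gamma)$, in fact showing it is $C^\infty$.

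Assembly then proceeds term by term over the finite state-sequence space $S^T$. The $m$-dependent part of each summand of $L_\lmb(O)$ is $\exp\{-\tfrac12 \sum_t \phi_t(m_{s_t})\}$, and composing $\phi_t$ with coordinate projection, then with $\exp$, then multiplying by the $m$-independent scalar $\eta_{s_1}\prod_t a_{s_{t-1}s_t}$ yields a \frechet{} differentiable summand in $(m_1,\ldots,m_p)$; summing finitely many such summands preserves differentiability. The argument for $Q(\lmb,\tilde\lmb)$ as a function of $\tilde m$ is even shorter, since the $\tilde m$-dependent part is the affine combination $\sum_s L_\lmb(O,s)\cdot(-\tfrac12)\sum_t \phi_t(\tilde m_{s_t})$ with the $L_\lmb(O,s)$ playing the role of constant coefficients. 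The only subtle point worth flagging is the well-definedness of $\pi_q(O_t - m_j)$: this requires $O_t - m_j \in H(\gamma)$, which is guaranteed by the setup in Section~\ref{sec:onsager} (after kernel smoothing of the observations, if necessary) and is preserved as $m_j$ varies because $H(\gamma)$ is a linear subspace. Since everything else is bookkeeping with the chain rule, the main obstacle is really just identifying $\pi_q$ as a bounded self-adjoint projection on the right Hilbert space; after that the proof is essentially forced.
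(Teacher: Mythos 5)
Your proposal is correct and follows essentially the same route as the paper's proof: both reduce the claim to the \frechet{} differentiability of the squared Hilbert norm (by expanding the inner product and reading off the linear term, with an $O(\abs{k}_H^2)$ remainder) combined with the smoothness of the projection $\pi_q$, and then assemble $L$ and $Q$ by finite sums, coordinate projections, and composition with $\exp$. The only difference is one of detail — you make the self-adjointness and idempotence of $\pi_q$ and the chain-rule bookkeeping explicit where the paper cites them — so there is nothing substantive to add.
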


\begin{proof}

Firstly, we note that the squared norm on real Hilbert spaces is differentiable, since 
 $$\|x+h\|^2=\langle x+h, x+h \rangle
= \|x\|^2+\|h\|^2+2\langle x,h \rangle = \|x\|^2 + 2\langle x,h \rangle +O(h)$$ is linear in $h$. Differentiability of both functions in $\{m_j\}_{j\in S}$ then follows from this and the smoothness of projections \cite[Corollary 6.2]{coleman2012calculus}. 
\end{proof}

\begin{lemma} \label{lem:meanreest} The function
	$
	\psi(\{\tilde{m}_j\}_{j\in S}) = 
	\sum_{s\in S^T} L_\lmb(O,s) \sum_{t=1}^T
	\abs{  
		{\pi}_{q}(O_t-\tilde{m}_{s_t})
	}_H^2
	$ 
	has global minima for each 
	$$
	{\tilde{m}}_j\in
	 \left\{z+\frac{\sum_{t=1}^T 
		\alpha_t(j)\beta_t(j){\pi}_{q}(O_t)}{\sum_{t=1}^T 
		\alpha_t(j)\beta_t(j)} \,:\, z\in Z\right\}
	$$
	where $Z = \{ a\in H \,:\, q(a)=0 \}$.
	Furthermore, these are the only critical points of the function.
\end{lemma}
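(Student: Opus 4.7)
The plan is to reduce $\psi$ to a collection of independent weighted least-squares problems on $H$, one per state, and then observe that the resulting quadratic form is convex with a unique minimizer modulo the kernel $Z$ of $\pi_q$.

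First, I would interchange the two sums and use the Markov structure. For each fixed $t$ and state $j$, partitioning the state sequences by $s_t=j$ gives
\[
\sum_{s\in S^T} L_\lmb(O,s)\,\abs{\pi_q(O_t-\tilde m_{s_t})}_H^2
= \sum_{j\in S}\abs{\pi_q(O_t-\tilde m_j)}_H^2 \sum_{s:\,s_t=j}L_\lmb(O,s),
\]
and the inner sum is exactly $\alpha_t(j)\beta_t(j)$ by the forward-backward definitions in Section~\ref{sec:baumWelch}. Consequently $\psi$ decouples across states as
\[
\psi(\{\tilde m_j\}_{j\in S}) \;=\; \sum_{j\in S}\;\underbrace{\sum_{t=1}^T \alpha_t(j)\beta_t(j)\,\abs{\pi_q(O_t-\tilde m_j)}_H^2}_{=:\,\psi_j(\tilde m_j)},
\]
so it suffices to minimize each $\psi_j$ separately.

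Next I would exploit that $\pi_q$ is a linear orthogonal projection on $H$ (per the setup in Section~\ref{sec:onsager}), so $\pi_q(O_t-\tilde m_j)=\pi_q(O_t)-\pi_q(\tilde m_j)$ and $\psi_j$ is a convex quadratic in $\pi_q(\tilde m_j)$. Setting the \frechet{} derivative to zero in an arbitrary direction $h\in H$ and using self-adjointness/idempotence of $\pi_q$, the stationarity condition reduces to
\[
\Bigl\langle \sum_{t=1}^T \alpha_t(j)\beta_t(j)\bigl[\pi_q(\tilde m_j)-\pi_q(O_t)\bigr],\,h\Bigr\rangle_{\!H}=0 \quad\text{for all } h\in H,
\]
equivalently
\[
\pi_q(\tilde m_j)=\bar v_j:=\frac{\sum_{t=1}^T\alpha_t(j)\beta_t(j)\,\pi_q(O_t)}{\sum_{t=1}^T\alpha_t(j)\beta_t(j)}.
\]
Because $\pi_q$ is idempotent with kernel exactly $Z$, the set of $\tilde m_j\in H$ satisfying this is the affine subspace $\bar v_j+Z$, which is precisely the set asserted in the lemma.

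Finally, I would conclude both claims from convexity: each $\psi_j$ is a nonnegative convex quadratic in $\tilde m_j$, so every critical point is a global minimizer and conversely every minimizer is a critical point. Hence $\bar v_j+Z$ is simultaneously the set of critical points and the set of global minimizers of $\psi_j$, and summing over $j$ gives the statement for $\psi$. The only delicate step is the first one, matching $\sum_{s:s_t=j}L_\lmb(O,s)$ to $\alpha_t(j)\beta_t(j)$; once that identity is in hand, everything else is standard Hilbert-space projection calculus with no genuine obstacle.
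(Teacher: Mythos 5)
Your proof is correct and follows essentially the same route as the paper: the same interchange of sums via the forward--backward identity $\sum_{s:\,s_t=j}L_\lmb(O,s)=\alpha_t(j)\beta_t(j)$ to decouple $\psi$ into independent per-state weighted least-squares problems, followed by minimization of each convex quadratic and the observation that the solution set is the affine subspace $\bar v_j+Z$ because $\pi_q$ has kernel $Z$. The only difference is in the last step and is cosmetic---the paper writes $\pi_q(\tilde m_j)=ch_0$ and optimizes the unit direction $h_0$ and magnitude $c$ separately, whereas you set the \frechet{} derivative to zero and read off the normal equations, which is arguably the cleaner way to land on the same formula.
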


\begin{proof}
	Let us denote ${T}_j(s) = \{t\,:\,s_t=j\}\subseteq\{1,\ldots,T\}$, 
	$S_j(t) = \{s\,:\,s_t=j\}\subset S^T$, 
	and  assume the forward  and backward probabilities 
	are known at the current step, 
	denoted by $\alpha_t$ and $\beta_t$ respectively. 
	We may then rewrite our function $\psi$ as follows.
	\begin{align*}
	  \psi(\{\tilde{m}_j\}_{j\in S})
	  &= \sum_{s\in S^T}
	     L_\lmb (O,s) \sum_{j=1}^p\sum_{t\in T_j(s)}\abs{  
		  {\pi}_{q}(O_t-\tilde{m}_{j})
	    }_H^2 \\
	 &=
	 \sum_{j=1}^p \sum_{t=1}^T \sum_{s\in S_j(t)} 
	   L_\lmb(O,s) \abs{  
	     {\pi}_{q}(O_t-\tilde{m}_j)
	   }_H^2 \\
	&=
	\sum_{j=1}^p
	\sum_{t=1}^T 
	\alpha_t(j)\beta_t(j)
	\abs{  
	  {\pi}_{q}(O_t-\tilde{m}_j)
	}_H^2.
	\end{align*}
	Each of the $p$ terms in the outer sum is a non-negative 
	function of only one of the parameters $\tilde{m}_j$ 
	to be optimized. We may therefore separately optimize each 
	of the terms. Abusing notation and letting $\tilde{m}_j$ 
	represent the set of minimizers for $\psi$, 
	we have 
	$$ 
	  \tilde{m}_j= \argmin{m\in H} \sum_{t=1}^T 
      \alpha_t (j)\beta_t (j)\abs{{\pi}_{q}(O_t-m)}_H^2.
    $$ 
    We note here that the arg\,min is invariant under 
    translation by elements in $Z$.
    For $B_1(0)\subset X$, the ball of radius 1 about the origin,
	{\allowdisplaybreaks
	\begin{align*}
	\pi_q(\tilde{m}_j)&=\argmin{h\in Z^\perp} 
	\sum_{t=1}^T 
	\alpha_t(j)\beta_t(j)
	\abs{  
		{\pi}_{q}(O_t)-h
	}_H^2\\
	&= \argmin{\substack{h_0\in Z^\perp\cap B_1(0)\\c>0}} 
	\sum_{t=1}^T 
	\alpha_t(j)\beta_t(j)
	\abs{{\pi}_{q}(O_t)-ch_0}_H^2   \\
	& = \argmin{\substack{h_0\in Z^\perp\cap B_1(0)\\c>0}} 
	\sum_{t=1}^T 
	\alpha_t(j)\beta_t(j)\left\{ \abs{{\pi}_{q}(O_t)}_H^2+c^2\abs{h_0}_H^2 -2 c\iprod{{\pi}_{q}(O_t)}{h_0}_H  \right\}\\
	& = \argmin{\substack{h_0\in Z^\perp\cap B_1(0)\\c>0}} 
	\sum_{t=1}^T 
	\alpha_t(j)\beta_t(j)\left\{c^2 -2c\iprod{{\pi}_{q}(O_t)}{h_0}_H \right\}\\
	&=\argmin{\substack{h_0\in Z^\perp\cap B_1(0)\\c>0}}\left\{c^2\sum_{t=1}^T 
	\alpha_t(j)\beta_t(j) -2c\iprod{\sum_{t=1}^T 
		\alpha_t(j)\beta_t(j){\pi}_{q}(O_t)}{h_0}_H \right\}.
	\end{align*}
	}
	
	Now we note that the variable $h_0$ minimizing the above equation 
	is independent of the choice of $c$ and that the second term is minimized 
	exactly when $h_0$ is parallel to the term in the inner product,
	i.e. $h_0= {\sum_{t=1}^T 
	\alpha_t(j)\beta_t(j){\pi}_{q}(O_t)}{|\sum_{t=1}^T 
	\alpha_t(j)\beta_t(j){\pi}_{q}(O_t)|_H^{-1}}$. We therefore have 
	\begin{multline*}
	\argmin{\substack{h_0\in Z^\perp\cap B_1(0)\\c>0}}\left\{c^2\sum_{t=1}^T 
	\alpha_t(j)\beta_t(j) -2c\iprod{\sum_{t=1}^T 
		\alpha_t(j)\beta_t(j){\pi}_{q}(O_t)}{h_0}_H \right\}\\
	= \argmin{c>0}\left\{c^2\sum_{t=1}^T 
	\alpha_t(j)\beta_t(j) -2c\left|\sum_{t=1}^T 
	\alpha_t(j)\beta_t(j){\pi}_{q}(O_t)\right|_H\right\}.
	\end{multline*}
    Thus we have 
    $$
    c=\frac{
      \abs{\sum_{t=1}^T 
	  \alpha_t(j)\beta_t(j){\pi}_{q}(O_t)}_H
	}{\sum_{t=1}^T \alpha_t(j)\beta_t(j)}
	~\text{ and }~
	\pi_q(\tilde{m}_j)=ch_0=\frac{
	  \sum_{t=1}^T \alpha_t(j)\beta_t(j){\pi}_{q}(O_t)
	}{
	  \sum_{t=1}^T \alpha_t(j)\beta_t(j)
	}.
	$$
	It follows that the members of 
	the product of sets 
	$\{\pi_q(\tilde{m}_j)+Z\}$ over $j\in S$ 
	are the global minima of $\psi$ over the open convex set
	${H}_0$, and thus, they are
	local minima, and hence, they are  
	also critical points \cite[Corollary 2.5]{coleman2012calculus}. 
	The lack of other critical points follows from 
	the convexity of $\psi$ over $H_0^{p}$ owing to the 
	convexity of the square norm and linearity of 
	$\pi_q$. See Theorem 7.4 (c) and Proposition 7.4 from \cite{coleman2012calculus}.
\end{proof}

\begin{theorem} \label{thm:globalmax}
	Every critical point of the reestimation function $Q(\lmb,\cdot)$ 
	is a global maximum and at least one such point exists. 
	Additionally, if $q$ is a norm, this maximum is unique.
\end{theorem}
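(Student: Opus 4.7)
The plan is to exploit an additive decomposition of $Q(\lambda,\tilde{\lambda})$ into three pieces depending on disjoint blocks of reestimation parameters, and then to handle each block as a concave maximization problem on a convex domain.

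Collecting the summands of $Q(\lambda,\tilde{\lambda}) = \sum_{s\in S^T}L_\lambda(O,s)\log L_{\tilde{\lambda}}(O,s)$ by the parameter they involve, and using the standard identities $\alpha_t(j)\beta_t(j) = \sum_{s\,:\,s_t=j}L_\lambda(O,s)$ together with the analogous identity for consecutive state pairs, I would rewrite
\[
  Q(\lambda,\tilde{\lambda})
  = \sum_{j=1}^p \alpha_1(j)\beta_1(j)\log\tilde{\eta}_j
  + \sum_{i,j=1}^p \Xi_{ij}\log\tilde{a}_{ij}
  - \tfrac{1}{2}\,\psi(\{\tilde{m}_j\}_{j\in S}),
\]
where $\Xi_{ij}=\sum_{t=1}^{T-1}\alpha_t(i)a_{ij}b_j(O_{t+1})\beta_{t+1}(j)$ and $\psi$ is the function analyzed in Lemma~\ref{lem:meanreest}. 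Each of the three summands depends on a disjoint block of coordinates of $\tilde{\lambda}$.

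Next, I would handle the two simplex terms. Each has the form $\tilde{p}\mapsto \sum_j C_j\log\tilde{p}_j$ on the $(p-1)$-simplex with non-negative coefficients, which is concave by concavity of $\log$. A Lagrange multiplier computation (equivalently, the Gibbs inequality) shows that the unique interior critical point is $\tilde{p}_j = C_j/\sum_k C_k$, which is therefore the unique global maximum; boundary points where some $\tilde{p}_j$ vanishes are excluded since they send the objective to $-\infty$. Applied to $\tilde{\eta}$ and row-by-row to $\tilde{a}$, this identifies the usual Baum reestimation formulas as unique block maximizers. For the third block, Lemma~\ref{lem:meanreest} already identifies the critical points of $\psi$ with its global minima and gives an explicit formula exhibiting at least one.

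Since $Q(\lambda,\cdot)$ factors across disjoint coordinate blocks, every critical point of $Q(\lambda,\cdot)$ is precisely a tuple of block-wise critical points; by the above each such tuple is block-wise a global maximum, and hence a global maximum of $Q$. This yields existence and the statement that every critical point is a global maximum. If $q$ is a norm, then $Z=\{0\}$ in Lemma~\ref{lem:meanreest} and $\pi_q$ is the identity on $H$, so $\psi$ is strictly convex on $H_0^p$ and has a unique global minimum; combined with the uniqueness in the simplex blocks, the global maximum of $Q(\lambda,\cdot)$ is unique. The one step requiring genuine care is the boundary behavior on the probability simplex and the clean invocation of Lemma~\ref{lem:meanreest} to pin down the critical set of $\psi$; all other claims reduce to standard properties of concave optimization.
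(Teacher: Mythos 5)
Your proposal is correct and follows essentially the same route as the paper: decompose $Q(\lambda,\cdot)$ into the three independent blocks (initial probabilities, transition rows, state means), maximize each block separately using concavity of $\log$ on the simplex with the boundary excluded because the logarithms diverge to $-\infty$, invoke Lemma~\ref{lem:meanreest} for the means block, and derive uniqueness when $q$ is a norm from strict concavity/convexity. Your version is slightly more explicit in writing out the coefficients $\alpha_1(j)\beta_1(j)$ and $\Xi_{ij}$ where the paper defers to Liporace, but the argument is the same.
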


\begin{proof}
	
The reestimation function $Q(\lmb,\tilde{\lmb})$ can be written as 
	\begin{align*}
	Q(\lmb,\tilde{\lmb})
	=& \sum_{s\in S^T}\left[
	L_\lmb(O,s)\left\{
	\log \tilde{\eta}_{s_1} + 
	\sum_{t=1}^T \log \tilde{a}_{s_{t-1}s_t} -
	\frac{1}{2}\sum_{t=1}^T\abs{  
		{\pi}_{q,s_t}(O_t-\tilde{m}_{s_t})
	}_H^2
	\right\}
	\right]\\
	=& \underbrace{\sum_{s\in S^T}
		L_\lmb(O,s)\log \tilde{\eta}_{s_1}}_{(1)}+
	   \underbrace{\sum_{s\in S^T}
		L_\lmb(O,s)	\sum_{t=1}^T \log \tilde{a}_{s_{t-1}s_t} }_{(2)}\\
	&-  \underbrace{ \frac{1}{2}\sum_{s\in S^T}\sum_{t=1}^T
		L_\lmb(O,s)
		\abs{  
			{\pi}_{q}(O_t-\tilde{m}_{s_t})
		}_H^2}_{(3)}.
	\end{align*}
	
	The arg\,max of the above 
	in $\tilde{\lmb}=(\{\tilde{\eta}_j\}_{j\in S}$, $\{\tilde{a}_{ij}\}_{i,j\in S}$,
	$\{\tilde{m}_j\}_{j\in S})$ can be broken down into taking the arg\,max of 
	(1) and (2) and the arg\,min of (3), 
	in $\{\tilde{\eta}_j\}_{j\in S}$, 
	$\{\tilde{a}_{ij}\}_{i,j\in S}$ and  $\{\tilde{m}_j\}_{j\in S}$, 
	respectively, in an independent manner. 
	The max of $Q(\lmb,\cdot)$ occurs exactly at the Cartesian product 
	set of these points.  For (1) and (2), we note that since $\sum_{i=1}^p{\tilde{\eta}_{i}}=1$ 
	and $\sum_{j=1}^p{\tilde{a}_{ij}}=1$ for each $i$, the boundary of 
	their respective optimizations over $\Lambda$ (using Lagrange multipliers for example) occur when at least one term in the 
	sum is 0. However, in this case, the logs approach $-\infty$ so 
	that (1) and (2) cannot be maximized at the boundaries. It follows 
	that any global maxima for these terms must occur at critical points. The reestimation formulae from Algorithm~\ref{algo:baumWelch} 
	give these points 
	(see Theorem~2 in \cite{liporace1982maximum} for derivation). 
	Lastly, term (3) is maximized over the Banach space $H(\gamma)$, so that any extrema are necessarily critical points \cite[Corollary 2.5]{coleman2012calculus} and at least one such point exists by Lemma~\ref{lem:meanreest}. The product of these sets 
	then gives the existence of critical points for $Q(\lambda,\cdot)$. 
	
	Addditionally, since the log function is strictly concave,  terms (1) and (2) are strictly concave in their 
	respective arguments. 
	Due to the strict convexity of squared norms, 
	term (3) is also concave in its argument, becoming strictly 
	concave when $q$ is a norm and ${\pi}_q$ is the identity 
	map on $H(\gamma)$. 
	It follows that every critical point of $Q(\lambda,\cdot)$ 
	is a global	maximum and uniqueness holds due to strict 
	concavity when $q$ is a norm \cite[Theorem 7.4, Proposition 7.3, Proposition 7.4]{coleman2012calculus}.
\end{proof}

\begin{theorem} 
  \label{mono}
  A point $\lmb$ in the parameter space is a critical point 
  of $L_{\lmb}(O)$ if and only if it is a fixed point 
  of the reestimation function, i.e. 
  $Q(\lmb, \lmb) = \max_{\tilde{\lmb}} Q(\lmb, \tilde{\lmb})$. 
  Furthermore 
  $$
    Q(\lmb,\tilde{\lmb})> Q(\lmb,\lmb) 
    \Rightarrow  
    L_{\tilde{\lmb}}(O)> L_{\lmb}(O_t).
  $$ 
  Hence increasing $Q(\lmb, \cdot)$ improves the likelihood. 
\end{theorem}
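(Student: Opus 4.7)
The plan is to reduce both claims to a single gradient identity relating $Q$ and $L$ along the diagonal, and to combine it with Jensen's inequality for the monotonicity statement.

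First, I would establish the monotonicity (``furthermore'') portion directly, using the classical auxiliary-function trick. Observe that, for fixed $\lmb$, the family $\{L_\lmb(O,s)/L_\lmb(O)\}_{s\in S^T}$ is a probability distribution on the finite set $S^T$. Writing
\[
  \log\frac{L_{\tilde{\lmb}}(O)}{L_\lmb(O)}
  = \log\sum_{s\in S^T}
    \frac{L_\lmb(O,s)}{L_\lmb(O)}
    \cdot\frac{L_{\tilde{\lmb}}(O,s)}{L_\lmb(O,s)},
\]
Jensen's inequality for the concave function $\log$ gives
\[
  \log\frac{L_{\tilde{\lmb}}(O)}{L_\lmb(O)}
  \ge \frac{1}{L_\lmb(O)}\bigl[Q(\lmb,\tilde{\lmb})-Q(\lmb,\lmb)\bigr],
\]
from which $Q(\lmb,\tilde{\lmb})>Q(\lmb,\lmb)$ immediately yields $L_{\tilde{\lmb}}(O)>L_\lmb(O)$. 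This step uses only that $S^T$ is finite, so the infinite-dimensional nature of the mean parameters does not intervene.

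Second, I would prove the ``if and only if'' statement via the gradient identity
\[
  \nabla_{\tilde{\lmb}} Q(\lmb,\tilde{\lmb})\bigm|_{\tilde{\lmb}=\lmb}
  = \sum_{s\in S^T} L_\lmb(O,s)\,
    \frac{\nabla L_{\tilde{\lmb}}(O,s)}{L_{\tilde{\lmb}}(O,s)}\bigm|_{\tilde{\lmb}=\lmb}
  = \sum_{s\in S^T}\nabla L_\lmb(O,s)
  = \nabla L_\lmb(O),
\]
where the derivatives in the mean variables are the \frechet{} derivatives and those in the simplex variables are the ordinary partials (both justified by the preceding differentiability lemma together with summability over the finite $S^T$). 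Hence the critical-point conditions of $L(\cdot)$ at $\lmb$ and of $\tilde{\lmb}\mapsto Q(\lmb,\tilde{\lmb})$ at $\tilde{\lmb}=\lmb$ coincide. Combining this with Theorem~\ref{thm:globalmax}, which tells us that every critical point of $Q(\lmb,\cdot)$ is a global maximum, gives the chain of equivalences: $\lmb$ is critical for $L$ $\iff$ $\tilde{\lmb}=\lmb$ is critical for $Q(\lmb,\cdot)$ $\iff$ $Q(\lmb,\lmb)=\max_{\tilde{\lmb}}Q(\lmb,\tilde{\lmb})$.

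The main subtlety I anticipate is bookkeeping around the simplex constraints on $\{\eta_j\}$ and each row of $A$. ``Critical point'' must be interpreted as stationarity of the Lagrangian. Since the constraint gradients are identical for $L$ and for $Q(\lmb,\cdot)$ (they are the same linear constraints $\sum_j\tilde{\eta}_j=1$ and $\sum_j\tilde{a}_{ij}=1$), the multipliers can be chosen to match, and the gradient identity above transfers verbatim to the constrained setting. The \frechet{}-derivative components in the Cameron-Martin directions carry no constraints and are handled directly by the identity. The only other care is to note that the ``$\Rightarrow$'' in the implication about $Q$ is strict, which follows from strict concavity of $\log$ in Jensen's inequality as soon as the ratios $L_{\tilde{\lmb}}(O,s)/L_\lmb(O,s)$ are not $\{L_\lmb(O,s)/L_\lmb(O)\}$-almost surely constant; a standard short argument shows this is automatic whenever $Q(\lmb,\tilde{\lmb})\neq Q(\lmb,\lmb)$.
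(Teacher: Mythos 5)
Your proposal is correct and takes essentially the same route as the paper: the identical gradient identity $\nabla_{\tilde{\lmb}}Q(\lmb,\tilde{\lmb})\big|_{\tilde{\lmb}=\lmb}=\nabla_{\lmb}L_{\lmb}(O)$ combined with Theorem~\ref{thm:globalmax} for the fixed-point equivalence, and a concavity-of-$\log$ bound for the monotonicity (the paper applies $\log x\le x-1$ termwise where you apply Jensen to the aggregated ratio; the two are interchangeable here). Your closing concern about strictness is unnecessary, since once $Q(\lmb,\tilde{\lmb})>Q(\lmb,\lmb)$ your non-strict Jensen bound already forces $\log\bigl(L_{\tilde{\lmb}}(O)/L_{\lmb}(O)\bigr)>0$.
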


\begin{proof}
	This closely follows the proof in \cite{LIPORACE1982}. 
	First, we note that for any real valued function $\psi$ on a normed space, $\psi(x)\frac{d\log(\psi(x))}{dx}=\frac{d\psi(x)}{dx}$. 
	Suppose $\lmb$ is a critical point of $L_\lmb(O)$. Then,
	\begin{multline*}
	  0=\nabla_{\tilde{\lmb}}{L_{\tilde{\lmb}}(O)}|_{\tilde{\lmb}=\lmb}
	  =\sum_{s\in S^T}\nabla_{\tilde{\lmb}}{L_{\tilde{\lmb}}(O,s)}|_{\tilde{\lmb}=\lmb}\\
	  = \sum_{s\in S^T}L_\lmb(O,s)\nabla_{\tilde{\lmb}}
	    {\log(L_{\tilde{\lmb}}(O,s))}|_{\tilde{\lmb}=\lmb}
	  = \nabla_{\tilde{\lmb}}Q(\lmb,\tilde{\lmb})|_{\tilde{\lmb}=\lmb}
	\end{multline*}
	implying that $\lmb$ is a critical point of $Q(\lmb,\cdot)$, 
	so that by Theorem~\ref{thm:globalmax}, 
	it is a fixed point of the reestimation. Furthermore, 
	as $\log x \leq x-1$ with equality holding if and only if $x=1$,
	\begin{align*}
	  Q(\lmb, \tilde{\lmb}) - Q(\lmb, {\lmb}) &= 
	  \sum_{s\in S^T} L_\lmb(O,s)\log\left\{
	    \frac{L_{\tilde{\lmb}}(O,s)}{L_{\lmb}(O,s)}
	  \right\}\\
	  &\le \sum_{s\in S^T} 
	  L_\lmb(O,s)\left(\frac{L_{\tilde{\lmb}}(O,s)}{L_{\lmb}(O,s)}-1\right)\\
	  &= L_{\tilde{\lmb}}(O)-L_{\lmb}(O).
	\end{align*}
    Thus, $Q(\lmb, \tilde{\lmb}) > Q(\lmb, {\lmb})$ 
    implies $L_{\tilde{\lmb}}(O)> L_{\lmb}(O)$ as 
    inequality in the above equation is an equality 
    if and only if $L_{\tilde{\lmb}}(O,s)=L_{\lmb}(O,s)$ 
    for each $s\in S$ thereby implying 
    $L_{\tilde{\lmb}}(O)=L_{\lmb}(O)$.
\end{proof}

\subsection{Limits for Model Parameters}

{In this section, we show that that the parameter sequence produced 
by the Baum-Welch algorithm satisfies nice convergence properties
where the parameters have a limit point that is a critical point 
of the likelihood.} 
A similar technique to the classical proofs 
in \cite{wu1983convergence} is used.
This requires two theorems, Berge's Maximum Theorem 
and Zangwill's Convergence Theorem, and some 
preliminary definitions and notation.
Our main result is Theorem~\ref{likelihoodconv}
below.

We recall that the parameter space $\Lambda$ for 
${\lmb}=(\{{\eta}_j\}_{j\in S}$, $\{{a}_{ij}\}_{i,j\in S}$,
$\{{m}_j\}_{j\in S})$ is a closed convex subset of 
$\mathbb{R}^p\times \mathbb{R}^{p\times p}\times H(\gamma)^p$, 
which is additionally compact when projected onto 
$\mathbb{R}^p\times \mathbb{R}^{p\times p}$.
Now by Lemma~\ref{lem:meanreest}, when considering the optimization problem for the likelihood, each $m_j$ can be considered to lie in 
$\overline{\text{conv}}(\{O_t\})+Z$ 
where $\overline{\text{conv}}(\{O_t\})\subset H(\gamma)$ is the 
closed convex hull of the observation sequence.  Without loss of generality, we may take 
$
 m_j={
   \sum_{t=1}^T \alpha_t(j)\beta_t(j){\pi}_{q}(O_t)
 }/{
   \sum_{t=1}^T \alpha_t(j)\beta_t(j)
 }
$ in our reestimation formulas by assigning the component in $Z$ 
to be ${0}$. Then, each state mean $m_j$ lies in
$\overline{\text{conv}}(\{O_t\})$, which is compact 
being the closed convex hull of a finite set of points
\cite[Corollary 5.30]{aliprantis} \cite[Theorem 3.20]{RUDIN_FUNC}. 
Therefore, the arg\,max of 
our objective function over $\lmb$ can be restricted to a 
compact convex subset $\mathcal{C}\subset \Lambda$. We note that in the 
case $q$ is a norm, $Z=\{0\}$
and the reestimation 
formula {automatically restricts to $\mathcal{C}$ 
(see Theorem \ref{thm:globalmax}).}

Given metric spaces $X$, $Y$, a function $f: X \to P(Y)$, 
where $P(Y)$ is the power set of $Y$, is called a \textit{correspondence}
on $X$. Such a map is said to be closed or \textit{upper hemicontinuous} 
if given sequences $\{x_n\}\subset X$ and $\{y_n\}\subset Y$ 
such that $y_n\in f(x_n)$ for each $n\in\natural$, 
$x_n\to x$ and $y_n\to y$ imply $y\in f(x)$. 
On the other hand, it is 
said to be \textit{lower hemicontinuous} if for any sequence 
$\{x_n\}\subset X$, $x_n\to x$ implies that for each 
$y\in f(x)$, there exists a subnet $x_{k_n}$ of $\{x_n\}$ 
and a sequence $\{y_n\}\subset Y$ with $y_n\subset f(x_{k_n})$ 
such that $y_n \to y$. A correspondence that is both upper 
and lower hemicontinuous is said to be continuous. 

\begin{theorem}[Berge's Maximum Theorem] 
  \cite[Theorem 17.31]{aliprantis} \label{berge}
  Let $X$ and $Y$ be Hausdorff topological spaces and 
  let $\phi:X\to P(Y)$ be a continuous correspondence 
  such that $\phi(x)$ is non-empty and compact for all 
  $x\in X$. Additionally suppose $f: X\times Y \to \mathbb{R}$ 
  is continuous. Then the correspondence $\mu: X\to P(Y)$ 
  given by $\mu(x)=\argmax{y\in \phi(x)} f(x,y)$ has non-empty 
  compact values and is upper hemicontinuous. 
\end{theorem}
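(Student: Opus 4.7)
The plan is to break the conclusion into two independent claims: first that $\mu(x)$ is non-empty and compact for every $x\in X$, and second that $\mu$ is upper hemicontinuous as a correspondence. The non-emptiness and compactness is essentially a standard Weierstrass extreme value argument: for fixed $x$, the set $\phi(x)$ is non-empty and compact and $f(x,\cdot)$ is continuous on it, so $f(x,\cdot)$ attains its supremum on $\phi(x)$. Writing $M(x)=\max_{y\in\phi(x)} f(x,y)$, the set $\mu(x)$ is the preimage of $\{M(x)\}$ under the continuous map $y\mapsto f(x,y)$ restricted to the compact set $\phi(x)$, hence closed in $\phi(x)$ and therefore compact.

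For upper hemicontinuity I would use the sequential (or, if $X,Y$ are not first countable, net-theoretic) characterization directly. Take $x_n\to x$ in $X$ and $y_n\to y$ in $Y$ with $y_n\in\mu(x_n)$; the goal is to show $y\in\mu(x)$. Since $y_n\in\phi(x_n)$ and $\phi$ is upper hemicontinuous with compact values, the closed-graph consequence immediately yields $y\in\phi(x)$. What remains is to show that $y$ is actually a maximizer, i.e.\ $f(x,y)\ge f(x,y')$ for every $y'\in\phi(x)$.

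Here is where the main obstacle lies, and where the lower hemicontinuity of $\phi$, which has not yet been used, becomes essential. Given an arbitrary $y'\in\phi(x)$, lower hemicontinuity produces a subnet $x_{k_n}$ of $x_n$ together with $y'_n\in\phi(x_{k_n})$ satisfying $y'_n\to y'$. By optimality of $y_{k_n}$ in $\phi(x_{k_n})$, one has $f(x_{k_n},y_{k_n})\ge f(x_{k_n},y'_n)$, and joint continuity of $f$ on $X\times Y$ lets us pass to the limit on both sides to obtain $f(x,y)\ge f(x,y')$. Since $y'\in\phi(x)$ was arbitrary, $y\in\mu(x)$ as required.

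The subtle point I would be careful about is precisely why both halves of continuity of $\phi$ are needed: upper hemicontinuity is used to keep the limit $y$ inside $\phi(x)$, while lower hemicontinuity is used to approximate arbitrary competitors $y'\in\phi(x)$ by admissible points in the nearby constraint sets $\phi(x_{k_n})$ so that the optimality inequality can be transported under the limit. Without lower hemicontinuity the comparison $f(x,y)\ge f(x,y')$ cannot be obtained for general $y'$, and $\mu$ may fail even to be closed-graph. The verification that the subnet can be chosen so that the inequality survives in the limit is the one step I would handle most carefully.
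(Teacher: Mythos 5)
The paper does not prove this statement at all: Berge's Maximum Theorem is imported verbatim from Aliprantis and Border (Theorem~17.31) as a black box, so there is no in-paper argument to compare yours against. Your proof is the standard one and is correct: the Weierstrass step gives non-emptiness, and compactness follows because $\mu(x)$ is a closed subset of the compact set $\phi(x)$; for the hemicontinuity step you correctly identify that upper hemicontinuity of $\phi$ keeps the limit $y$ inside $\phi(x)$ while lower hemicontinuity is what lets you manufacture competitors $y'_n\in\phi(x_{k_n})$ converging to an arbitrary $y'\in\phi(x)$, so that the inequality $f(x_{k_n},y_{k_n})\ge f(x_{k_n},y'_n)$ passes to the limit by joint continuity. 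One small point worth making explicit: what your argument directly establishes is that $\mu$ has closed graph (along nets). That coincides with the paper's stated definition of upper hemicontinuity, so within this paper's conventions you are done; under the neighborhood-based definition used in Aliprantis and Border one would add the one-line observation that $\mu$ is dominated by the upper hemicontinuous compact-valued correspondence $\phi$ (i.e.\ $\mu(x)\subseteq\phi(x)$), whence closed graph upgrades to upper hemicontinuity. You also rightly flag the need for nets rather than sequences in general Hausdorff spaces, which the paper's sequence-phrased definitions gloss over.
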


Let $X=\Lambda$, $Y=\Lambda$ in the above theorem, and let 
$\phi:\Lambda\to P(\Lambda)$ be given by the constant map 
$\phi(x)=\mathcal{C}$ where we recall that $\mathcal{C}$ 
is non-empty, compact, and convex. This is continuous. 
By choosing 
$f:\Lambda\times \Lambda\to \mathbb{R}$ with $f(x,y)= Q(x, y)$,
the map $\lmb\mapsto\mu(\lmb)=\argmax{y\in \mathcal{C}} \ Q(\lmb,y)$ 
is upper hemicontinuous by Theorem~\ref{berge}.

\begin{theorem}[Zangwill's Convergence Theorem] \cite[page 91]{zangwill1969nonlinear}
 Suppose $M$ is a correspondence $M: X\to P(X)$ that 
 generates a sequences $\{x_n\}$ with $x_{n+1}\in M(x_n)$
 that is initiated with $x_0\in X$. 
 Suppose a ``solution set'' $\Gamma\subset X$ is given and 
 suppose that
	\begin{enumerate}
		\item $\{x_n\}\subset S$ for a compact set $S\subset X$
		\item There is continuous function $f$ on $X$ satisfying \begin{enumerate}
			\item if $x\in \Gamma$, then $f(x)\leq f(y)$ for all $y\in M(x)$
			\item if $x\not\in \Gamma$, then $f(x)< f(y)$ for all $y\in M(x)$
			\end{enumerate}
		\item $M$ is upper hemicontinuous on $X\setminus\Gamma$
    \end{enumerate}
Then, every limit point of $\{x_n\}$ lies in $\Gamma$.
\end{theorem}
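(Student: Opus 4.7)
The plan is to argue by contradiction. Let $x^*$ be a limit point of the sequence $\{x_n\}$ and suppose, towards a contradiction, that $x^* \notin \Gamma$. By hypothesis (1), the sequence lives in the compact set $S$, so there is a subsequence $x_{n_k} \to x^*$. Consider the ``successor'' sequence $y_k := x_{n_k+1}$, which by construction satisfies $y_k \in M(x_{n_k})$. Because $\{y_k\} \subset S$ is also contained in a compact set, we may pass to a further subsequence (which I will denote by the same indices) so that $y_k \to y^*$ for some $y^* \in S$.

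Next I would exploit the hemicontinuity hypothesis. Since $x^* \notin \Gamma$, hypothesis (3) tells us that $M$ is upper hemicontinuous at $x^*$. Applying the definition of upper hemicontinuity to the pair of convergent sequences $x_{n_k} \to x^*$ and $y_k \to y^*$ with $y_k \in M(x_{n_k})$, we conclude $y^* \in M(x^*)$. Then hypothesis (2b), combined with $x^* \notin \Gamma$, gives the strict inequality $f(x^*) < f(y^*)$.

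The final step is to derive a contradiction using the monotonic behavior of $f$ along the full sequence. Conditions (2a) and (2b) together imply that $f(x_n) \le f(x_{n+1})$ for every $n$, so $\{f(x_n)\}$ is a non-decreasing real sequence. Since $f$ is continuous on the compact set $S$, it is bounded there, hence $f(x_n)$ converges to some limit $L \in \mathbb{R}$. Applying continuity of $f$ along the two subsequences yields $f(x_{n_k}) \to f(x^*)$ and $f(y_k) = f(x_{n_k + 1}) \to f(y^*)$, but both limits must equal $L$, giving $f(x^*) = f(y^*)$. This contradicts the strict inequality from the previous step, so $x^* \in \Gamma$ as required.

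The main technical obstacle I anticipate is the careful bookkeeping around subsequences: one must guarantee that after extracting a subsequence of $\{x_n\}$ converging to $x^*$, the corresponding shifted subsequence $\{y_k\}$ still allows us to invoke upper hemicontinuity in the form stated (sequences, not merely nets). This is where compactness of $S$ is essential, as it lets us pass to a convergent sub-subsequence of $\{y_k\}$ without losing the relation $y_k \in M(x_{n_k})$. Everything else is a bookkeeping argument around the monotone convergence of $f(x_n)$, which in turn rests crucially on the fact that conditions (2a) and (2b) together yield monotonicity irrespective of whether $x_n \in \Gamma$ or not.
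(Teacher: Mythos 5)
Your argument is correct. Note that the paper does not prove this theorem at all --- it is quoted from Zangwill (1969) and used as a black box --- so there is no in-paper proof to compare against; your contradiction argument (extract $x_{n_k}\to x^*$, pass to a convergent sub-subsequence of the successors $y_k=x_{n_k+1}$, use closedness of $M$ at $x^*\notin\Gamma$ to get $y^*\in M(x^*)$ and hence $f(x^*)<f(y^*)$, then contradict this with the convergence of the monotone bounded sequence $f(x_n)$) is the standard and complete proof, and it correctly uses the sequential closed-graph definition of upper hemicontinuity that the paper adopts.
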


\begin{theorem} \label{likelihoodconv}
  Suppose $\{\lmb_n\}$ is the sequence generated 
  by Algorithm~\ref{algo:baumWelch} with 
  $M: \Lambda \to P(\Lambda)$ defined as
  $\lmb\mapsto \argmax{\tilde{\lmb} \in C} \ Q(\lmb, \tilde{\lmb})$ 
  and $\lmb_{n+1}\in M(\lmb_n)$. Then, all the limit points of 
  $\{\lmb_n\}$ are critical points of the likelihood, achieving the same likelihood, and the sequence 
  $L_{\lmb_n}(O)$ converges to this value.  Furthermore, at least one such point exists.
\end{theorem}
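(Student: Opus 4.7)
I would apply Zangwill's Convergence Theorem with $X=\Lambda$, $f(\lmb)=L_{\lmb}(O)$, solution set $\Gamma$ taken to be the critical points of the likelihood, compact set $S=\mathcal{C}$, and $M$ the reestimation correspondence as defined in the statement. The four hypotheses of Zangwill then need to be verified in turn; once they are, existence of a limit point follows from compactness of $\mathcal{C}$, and the convergence of $L_{\lmb_n}(O)$ to a common value is a standard consequence of monotonicity plus continuity.

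First I would verify that $\{\lmb_n\}\subset\mathcal{C}$ (Zangwill's condition (1)). This is essentially already argued in the paragraph preceding the theorem: after one iteration of Algorithm~\ref{algo:baumWelch}, the reestimated initial and transition probabilities lie in the simplices and, by Lemma~\ref{lem:meanreest} together with the convention that the $Z$-component of each reestimated mean is zero, each mean lies in $\overline{\text{conv}}(\{O_t\})$. Next, for conditions (2a) and (2b), I would invoke Theorem~\ref{mono}: if $\lmb\in\Gamma$ then $\lmb\in M(\lmb)$, so any $\tilde{\lmb}\in M(\lmb)$ satisfies $Q(\lmb,\tilde{\lmb})=Q(\lmb,\lmb)$, and the inequality $Q(\lmb,\tilde{\lmb})-Q(\lmb,\lmb)\le L_{\tilde{\lmb}}(O)-L_{\lmb}(O)$ from its proof forces $L_{\tilde{\lmb}}(O)\ge L_{\lmb}(O)$; if $\lmb\notin\Gamma$ then $\lmb\notin M(\lmb)$, every $\tilde{\lmb}\in M(\lmb)$ gives $Q(\lmb,\tilde{\lmb})>Q(\lmb,\lmb)$, and Theorem~\ref{mono} yields the strict inequality $L_{\tilde{\lmb}}(O)>L_{\lmb}(O)$.

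Condition (3), upper hemicontinuity of $M$ on $\Lambda\setminus\Gamma$, is exactly what the application of Berge's theorem following Theorem~\ref{berge} provides, applied to the continuous function $Q$ on $\Lambda\times\mathcal{C}$. With all four Zangwill hypotheses established, every limit point of $\{\lmb_n\}$ lies in $\Gamma$, and compactness of $\mathcal{C}$ gives the existence of at least one such limit point. Finally, Theorem~\ref{mono} renders $L_{\lmb_n}(O)$ monotone non-decreasing, bounded above by continuity of $L$ on the compact set $\mathcal{C}$, and hence convergent to some $L^\ast$; continuity of $L$ then forces every limit point $\lmb^\ast$ of $\{\lmb_n\}$ to satisfy $L_{\lmb^\ast}(O)=L^\ast$, establishing the common-likelihood claim.

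The main obstacle I anticipate is the rigorous verification of upper hemicontinuity, since the $\log\tilde{\eta}_{s_1}$ and $\log\tilde{a}_{s_{t-1}s_t}$ terms make $Q$ extended-real valued on the boundary of $\Lambda$. I would handle this by observing, as in the proof of Theorem~\ref{thm:globalmax}, that every maximizer of $Q(\lmb,\cdot)$ lies in the interior of the probability simplices, so Berge may be applied after restricting to a slightly shrunken compact subset of $\mathcal{C}$ on which $Q$ is finite-valued and jointly continuous. A secondary subtlety is that $M(\lmb)$ may contain multiple means differing by elements of $Z$ when $q$ is merely a seminorm; the $Z=\{0\}$ convention and the observation that all such choices produce identical likelihood values will neutralize this issue.
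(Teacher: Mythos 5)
Your proposal is correct and follows essentially the same route as the paper's proof: verify Zangwill's hypotheses using the compactness of $\mathcal{C}$, the monotonicity statements from Theorem~\ref{mono}, and the upper hemicontinuity supplied by Berge's theorem, then obtain convergence of $L_{\lmb_n}(O)$ from monotonicity and continuity. Your closing remarks on the $\log$ terms blowing up at the simplex boundary and on the $Z$-translation ambiguity address subtleties the paper's proof passes over silently, so they are a welcome refinement rather than a divergence.
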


\begin{proof}
  Let us denote $\mathcal{M}=\argmax{\lmb\in \Lambda} \, L_\lmb(O)$, 
 the set of critical points of the likelihood function of $L(\lambda)=L_\lambda(O)$ on $\Lambda$ and consider $M$ as given.
  Note that the points $\{\lmb_n\}$ generated by 
  Algorithm~\ref{algo:baumWelch} satisfy the following conditions
  \begin{enumerate}
	\item  $\{\lmb_n\}\subset \mathcal{C}\subset \Lambda$ 
	  where $\mathcal{C}$ is compact and convex;
	\item $L: \Lambda \to \real$, $\lmb\mapsto L_\lmb(O)$ 
	  is continuous and by Theorem~\ref{mono}
	\begin{enumerate}
		\item If $\lmb\not\in \mathcal{M}$, then, it is not a 
		fixed point of the reestimation i.e. for all 
		$\tilde{\lmb}\in M(\lmb)$, $Q(\lmb,\tilde{\lmb})>Q(\lmb,\lmb)$ 
		which implies $L(\tilde{\lmb})>L(\lmb)$.
		\item However if $\lmb\in \mathcal{M}$, then by the definition 
		of $M(\lmb)$, $Q(\lmb,\tilde{\lmb})\geq Q(\lmb,\lmb)$ which implies
		$L(\tilde{\lmb})\geq L(\lmb)$.
	\end{enumerate}
    \item Lastly, take $\phi: \Lambda\to P(\Lambda)$ to be the 
    constant correspondence given by $\lmb\mapsto \mathcal{C}$. 
    It is easily checked that this is hemicontinuous and clear 
    that $\phi(\lmb)$ is compact and non-empty for each 
    $\lmb\in \Lambda$. Further, since 
    $Q:\Lambda\times \Lambda \to \real$, 
    $(\lmb, \tilde{\lmb})\mapsto Q(\lmb,\tilde{\lmb})$ 
    is continuous, by Berge's maximum theorem, $M$ is 
    upper hemicontinuous on all of $\Lambda$, whence on 
    $\Lambda\setminus \Gamma$. 
  \end{enumerate}


  As all of the requirements for the Zangwill's Convergence Theorem 
  are satisfied, every limit point of $\{\lmb_n\}$ lies in $\mathcal{M}$
  so that the first part of the theorem holds. 
  Note that by Theorem~\ref{mono} and the monotone convergence theorem
  $\{L(\lmb_n)\}$ converges to $\sup_{n}L(\lmb_n)$. 
  If $\lmb$ is any limit point of $\lmb_n$, then by the uniqueness 
  of the limit, $L(\lmb)=\sup_{n}L(\lmb_n)$ whereby all such limit points give the same likelihood. That such a limit point always exists follows from 
  $\mathcal{C}$ being compact. 
\end{proof}

We conclude this section by noting that if a stronger version of 
Theorem~\ref{mono} holds where every fixed point of the reestimation 
is a local maximum of $L_\lmb(O)$, 
then Theorem~\ref{likelihoodconv} can be extended to the 
likelihood function converging to a local maximum. 
In the special case where either the likelihood or 
the log likelihood functions are concave, 
convergence to the global maximum follows 
due to the properties of concave functions. 

\subsection{Limits for Mixture Distributions}

The previous theorems show that the parameter 
sequence $\{\lmb_i\}_{i=1}^\infty$ produced 
by the Algorithm~\ref{algo:baumWelch} has a unique
limit point when $q$ is a norm.
In this section, we show that this quickly implies
the existence of a weak limit point for the sequence 
of corresponding measures $\{\mu_i\}_{i=1}^\infty$ where 
$$
  \mu_i = 
  \textstyle\sum_{s\in {S}^T} \omega_s^{(i)} \gamma^{(i)}({s})
$$
for weights $\omega_s^{(i)} \propto \prod_{t=1}^T
   \alpha_t^{(i)}(s_t)\beta_t^{(i)}(s_t)$.
These $\mu_i$ are finite mixtures of Gaussian measures.

Let $\lmb_0\in\Lambda$ be the set of initial THMM parameters
and $\gamma_0$ be a centred Gaussian measure on 
a locally convex space $X$.
Each iteration of the Baum-Welch algorithm produces
updated parameters $\lmb_i$, $i=1,\ldots,\infty$,
which consist of a $p$-long vector of initial state 
probabilities $\eta^{(i)}$, a $p\times p$ Markov
transition matrix $A^{(i)}$, and means 
$m_j^{(i)}\in H(\gamma_0)$ for $j=1,\ldots,p$.
Conditioned on a fixed state sequence 
${s}\in {S^T}=\{1,\ldots,p\}^T$ 
at iteration $i$, we have a Gaussian measure 
$
 \gamma^{(i)}({s})=
 \bigotimes_{t=1}^T \gamma^{(i)}_t({s}_t)
$
on the product space $X^{\otimes T}$
where 
$
  \gamma^{(i)}_t({s}_t) = 
  \gamma_0( \cdot - m_{{s}_t}^{(i)})
$
is the Gaussian 
measure $\gamma_0$ on $X$ shifted by 
$m_{{s}_t}^{(i)}$.
Thus, the Baum-Welch algorithm produces a 
sequence of measures 
$\{\mu_i\}_{i=1}^\infty$ on $X^{\otimes T}$, 
which are mixtures of $\abs{{S}^T}=p^T$ 
Gaussian measures.  

\begin{corollary}
  \label{thm:weakConv}
  The sequence of measures $\mu_i$ has a weak limit point.
\end{corollary}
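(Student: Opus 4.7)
The plan is to extract a convergent subsequence of parameters via Theorem~\ref{likelihoodconv} and then transfer this convergence to weak convergence of the corresponding mixture measures. Since the sequence $\{\lmb_i\}$ lies in the compact set $\mathcal{C}\subset\real^p\times\real^{p\times p}\times H(\gamma_0)^p$, there exists a subsequence along which $\lmb_{i_k}\to\lmb^* = (\eta^*,A^*,\{m_j^*\}_{j\in S})$ coordinatewise. In particular, $m_j^{(i_k)}\to m_j^*$ in the Cameron-Martin norm, and because $H(\gamma_0)$ is continuously embedded in $X$, we also have $m_j^{(i_k)}\to m_j^*$ in the topology of $X$.

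The next step is to show that for each fixed state sequence $s\in S^T$, the product Gaussian measure $\gamma^{(i_k)}(s)=\bigotimes_{t=1}^T\gamma_0(\cdot-m_{s_t}^{(i_k)})$ converges weakly on $X^{\otimes T}$ to $\gamma^*(s)=\bigotimes_{t=1}^T\gamma_0(\cdot-m_{s_t}^*)$. It suffices to work factor by factor. For any bounded continuous $f:X\to\real$ and any $x\in X$, continuity of $f$ gives $f(x-m_j^{(i_k)})\to f(x-m_j^*)$ pointwise, and the integrands are uniformly bounded by $\sup_{y\in X}\abs{f(y)}$. Dominated convergence then yields $\int_X f(x-m_j^{(i_k)})\gamma_0(dx)\to \int_X f(x-m_j^*)\gamma_0(dx)$, establishing weak convergence of each shifted Gaussian factor, and hence of the product on $X^{\otimes T}$ since a finite tensor product of weakly convergent Radon measures converges weakly.

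Finally, I would verify that the mixture weights converge. The weights $\omega_s^{(i)}\propto\prod_{t=1}^T\alpha_t^{(i)}(s_t)\beta_t^{(i)}(s_t)$ are computed via the forward-backward recursions in Algorithm~\ref{algo:baumWelch}, and depend continuously on the entries of $\eta^{(i)}$, $A^{(i)}$, and on the emission values $b_j^{(i)}(O_t)=\exp(-\frac{1}{2}\abs{\pi_q(O_t-m_j^{(i)})}_H^2)$. Since $\pi_q$ is a continuous orthogonal projection on $H(\gamma_0)$ and the Cameron-Martin norm is continuous, we have $b_j^{(i_k)}(O_t)\to b_j^*(O_t)$, whence $\omega_s^{(i_k)}\to\omega_s^*$ with $\sum_{s}\omega_s^*=1$. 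Writing $\mu_{i_k}=\sum_{s\in S^T}\omega_s^{(i_k)}\gamma^{(i_k)}(s)$ as a finite convex combination of measures whose components converge weakly and whose weights converge, the mixture itself converges weakly to $\mu^*=\sum_{s\in S^T}\omega_s^*\gamma^*(s)$. The main technical subtlety, namely the joint continuity of translation and projection in the Cameron-Martin topology, is handled cleanly by the continuous embedding $H(\gamma_0)\hookrightarrow X$ combined with dominated convergence.
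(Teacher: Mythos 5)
Your proposal is correct and follows essentially the same route as the paper: compactness of the parameter set yields a limit point of $\{\lmb_i\}$, and continuity of the map $\lmb\mapsto\mu$ into the space of measures with the weak topology transfers it to $\{\mu_i\}$. The only difference is that you spell out the continuity of this map (shift-continuity of $\gamma_0$ via dominated convergence, weak convergence of finite products, and convergence of the mixture weights), which the paper simply asserts.
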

\begin{proof}
  The reestimated means at algorithm iterate $i$ and state $j$,
  $m_{j}^{(i)}$,
  lie in the compact convex hull
  $K=\overline{\text{conv}}\{O_t\}_{t=1}^T$.
  It follows that the sequence 
  $\{m^{(i)}\}_{i\in\natural}$ 
  lying in  
  $K^p$ has a limit point $m = (m_1, \ldots ,m_p)$.  
  Similarly, the Markovian parameters $\eta^{(i)}$ and
  $A^{(i)}$ lie in a compact set and thus have a 
  limit point.
  Let $\omega_s^{(i)}$ be weights corresponding to each state sequence
  $s\in S^T$ where 
  $\omega_s^{(i)} \propto \prod_{t=1}^T
   \alpha_t^{(i)}(s_t)\beta_t^{(i)}(s_t)$
  such that $\sum_{s\in {S}^T}\omega_s^{(i)} = 1$.
  The mapping 
  $$
    \lmb_i = (\eta^{(i)},A^{(i)},m^{(i)}) \rightarrow
    \mu_i = \textstyle\sum_{s\in {S}^T} \omega_s^{(i)} \gamma^{(i)}({s})
  $$
  is continuous {when the latter probability space on $X^{\otimes T}$ 
  is equipped with the weak topology}.  
  Thus, { given a weak limit point $\lmb$ of} $\{\lmb_i\}_{i=1}^\infty$,
  {$\mu$, the corresponding image of the point 
  under the above map,} is a weak limit point for $\mu_i$.
\end{proof}

\section{Simulated Data Analysis}
\label{sec:data}

In the following sections, we test our THMM algorithm on  
a variety of simulated datasets from three different 
settings: Brownian motion with linear drift, 
the Ornstein-Uhlenbeck process, and 
fractional Brownian motion with linear drift.
To evaluate 
its performance accuracy, we use the adjusted Rand index (ARI)
as our performance metric.  The ARI is a popular method of
measuring the agreement between two sets of labels and
is computed in R via the \texttt{adjustedRandIndex()} 
function in the \texttt{mclust} package \citep{MCLUST}.
An ARI value of 1 indicates a perfect match whereas 
an ARI value of 0 indicates random guessing.

For each of the following simulations, it is possible to 
concoct an algorithm specifically designed to perform 
well on that specific dataset; e.g. via feature selection.  
The true power of the 
THMM approach is that it is generally applicable and 
adaptable to all of these settings of interest as
well as others not considered in this work.

\subsection{Brownian Motion with Drift}
\label{sec:BMWD}

For a simple setting to test the THMM algorithm,
we simulate a sequence of $T=200$ Brownian sample
paths with 5 different states corresponding to 
different drift parameters.  For the ``low separation''
simulation, the drift parameters are $-4,-2,0,2,4$.
For the ``medium separation''
simulation, the drift parameters are $-8,-4,0,4,8$.
In both cases the Markov transition matrix is 
$$
  A = \begin{pmatrix}
    0.64 & 0.09 & 0.09& 0.09& 0.09\\
    0.09 & 0.64 & 0.09& 0.09& 0.09\\
    0.09 & 0.09 & 0.64 & 0.09& 0.09\\
    0.09 & 0.09& 0.09& 0.64 & 0.09\\
    0.09 & 0.09& 0.09& 0.09&0.64  \\
  \end{pmatrix}
$$
This data is displayed in Figure~\ref{fig:bmwdSim}.

The results of running the Baum-Welch and 
Viterbi algorithms making 
use of the Onsager-Machlup functional for Brownian 
motion with drift, see Section~\ref{sec:OMWiener}, 
are displayed in Table~\ref{tab:bmwdSim}.
The ARI is 0.457 for the low separation setting
and 0.842 for the medium separation setting.
The estimated drift parameters are 
$-3.40,-0.07,1.50,2.28,3.54$ for low separation
and $-8.00,-3.83,0.85,4.02,7.63$ for medium.
The estimated transition matrices are 
$$
  \hat{A}_\mathrm{low} = \begin{pmatrix}
    0.63 &0.04 &0.12 &0.14 &0.08\\
    0.24 &0.53 &0.07 &0.05 &0.12\\
    0.20 &0.07 &0.58 &0.12 &0.04\\
    0.04 &0.13 &0.37 &0.22 &0.25\\
    0.05 &0.31 &0.28 &0.25 &0.11\\
  \end{pmatrix}
  \text{ and }
  \hat{A}_\mathrm{med} = \begin{pmatrix}
    0.69 &0.22 &0.02 &0.06 &0.00\\
    0.06 &0.61 &0.11 &0.11 &0.11\\
    0.05 &0.15 &0.53 &0.10 &0.17\\
    0.08 &0.09 &0.08 &0.68 &0.07\\
    0.13 &0.08 &0.12 &0.04 &0.64\\
  \end{pmatrix}.
$$

\begin{figure}
    \centering
    \includegraphics[width=0.45\textwidth]{\PICDIR/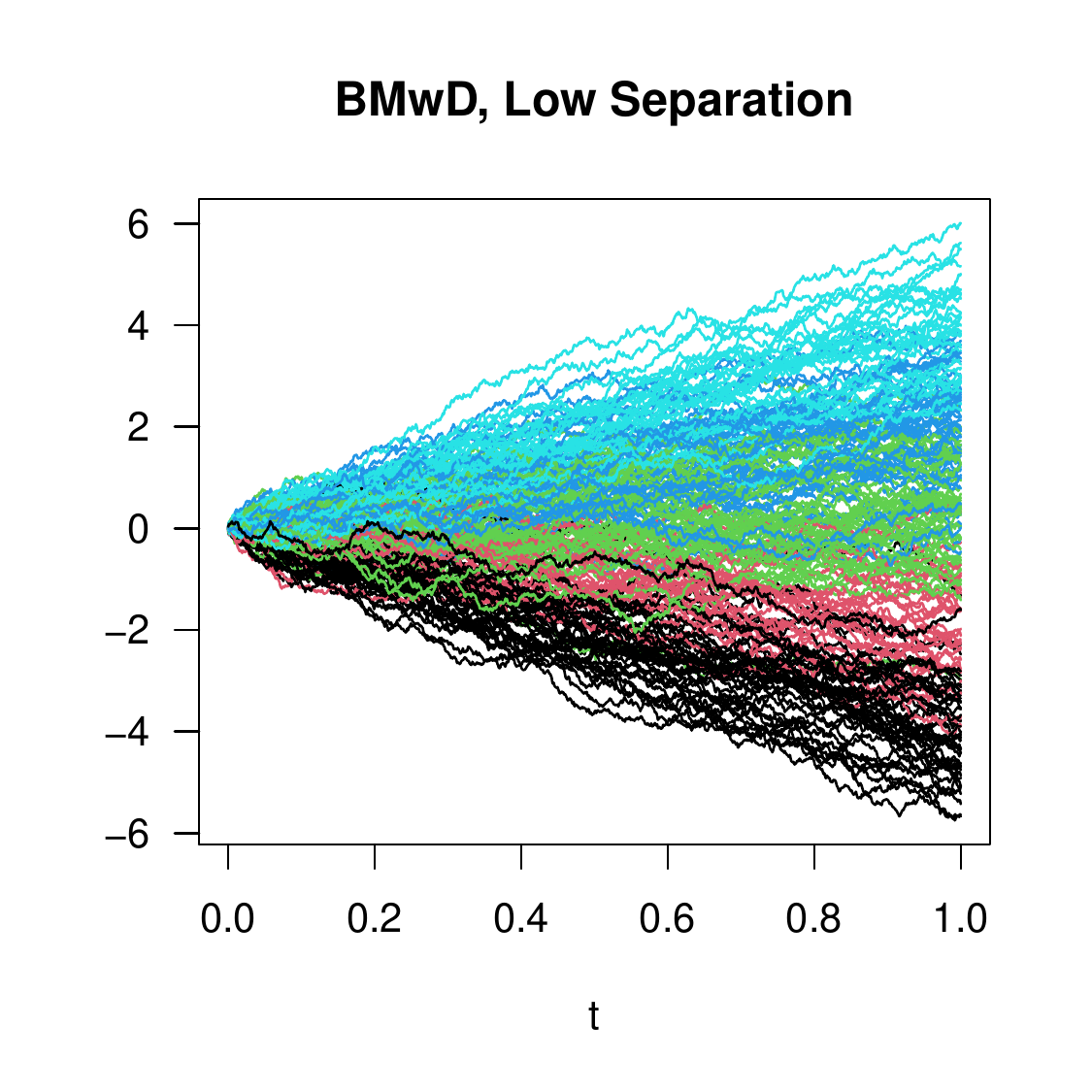}
    \includegraphics[width=0.45\textwidth]{\PICDIR/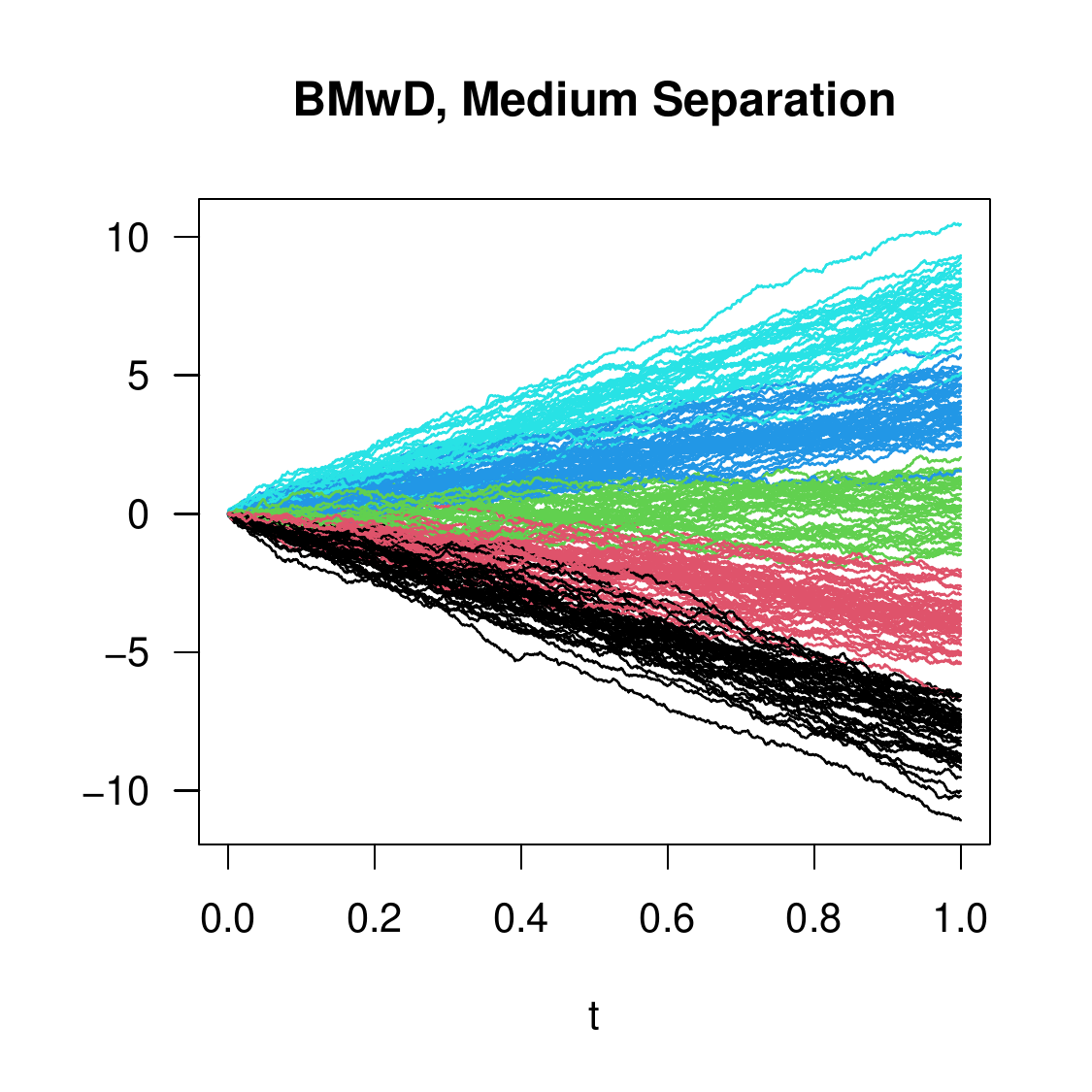}
    \caption{
      Simulated Brownian motion with five states with drift 
      parameters $(-4,-2,0,2,4)$ on the left and 
      $(-8,-4,0,4,8)$ on the right.
    }
    \label{fig:bmwdSim}
\end{figure}

\begin{table}
    \centering
    \begin{tabular}{rrrrrrr}
    \hline\hline
    &\multicolumn{6}{c}{\bf BMWD Low Separation}\\
    & \multicolumn{5}{c}{\bf True States} & 
      \multicolumn{1}{c}{Est} \\
        &A& B& C& D& E& \multicolumn{1}{c}{drift} \\
        \hline
      a &41 &20 & . & . & . & -3.40\\
      b &1  &8  &42 &13 &.  & -0.07\\
      c &.  &1  &7  &15 &.  &  1.50 \\
      d &.  &.  &1  &10 &30 &  2.28 \\
      e &.  &.  &.  &8  &3  &  3.54 \\
   drift&-4 & -2 & 0 & 2 & 4 \\
   \hline\hline
    \end{tabular}
    \begin{tabular}{rrrrrrr}
    \hline\hline
    &\multicolumn{6}{c}{\bf BMWD Medium Separation}\\
    & \multicolumn{5}{c}{\bf True States} & 
      \multicolumn{1}{c}{Est} \\
        &A& B& C& D& E& \multicolumn{1}{c}{drift} \\
        \hline
      a &41 & . & . & . & . & -8.00\\
      b &. & 46 & 3 & . & .  & -3.83\\
      c &. & . & 31 & 8 & .  &  0.85 \\
      d &. & . & 1 & 39 & 1 &  4.02 \\
      e &. & . & . & 1 & 29  &  7.63 \\
   drift&-8 & -4 & 0 & 4 & 8 \\
   \hline\hline
    \end{tabular}
    \caption{
      Confusion matrices showing the alignment of
      true states (A-E) with estimated states (a-e).
      The left is low separation with an ARI of 0.457.
      The right is medium separation with an ARI of 0.842.
    }
    \label{tab:bmwdSim}
\end{table}

\subsection{Ornstein-Uhlenbeck Process}
\label{sec:OUP}

The one-dimensional Ornstein-Uhlenbeck process has 
the form 
$$
  dY_\tau = c(\mu - Y_\tau) d\tau + dW_\tau
$$
with two parameters.
The mean parameter $\mu$ is where the process 
tends to in the long run, and the concentration 
parameter $c$ determines how tightly the process
fluctuates around its mean.  However, this form 
is numerically challenging to optimize.  Instead, 
the THMM estimates the transformed variables
$b_0 = c\mu$ and $b_1 = c$. This is implemented 
in \texttt{R} via the \texttt{optim} function
with the \texttt{L-BFGS-B} method.  The function
to minimize is 
$$
  u({b_0,b_1}) =  \frac{\sum_{t=1}^T
  \alpha_t(j)\beta_t(j)\left\{
    \frac{1}{2}
    \int_{0}^1 \{
      \dot{O}_{t,\tau} - (b_0 - b_1 O_{t,\tau})
    \}^2d\tau - \frac{b_1}{2}
  \right\} }{
    \sum_{t=1}^T
    \alpha_t(j)\beta_t(j)
  }
$$
with derivatives 
\begin{align*}
  \frac{\partial u}{\partial b_0} &= 
  -\frac{\sum_{t=1}^T
  \alpha_t(j)\beta_t(j)\left\{
    \int_{0}^1 \{
      \dot{O}_{t,\tau} - (b_0 - b_1 O_{t,\tau})
    \}d\tau 
  \right\} }{
    \sum_{t=1}^T
    \alpha_t(j)\beta_t(j)
  }\\
  \frac{\partial u}{\partial b_1} &= 
  \frac{\sum_{t=1}^T
  \alpha_t(j)\beta_t(j)\left\{
    \int_{0}^1 \{
      \dot{O}_{t,\tau} - (b_0 - b_1 O_{t,\tau})
    \}{O}_{t,\tau}d\tau - \frac{1}{2} 
  \right\} }{
    \sum_{t=1}^T
    \alpha_t(j)\beta_t(j)
  }
\end{align*}
where we divide everything by 
$\sum_{t=1}^T\alpha_t(j)\beta_t(j)$ for
numerical stability reasons.

In this simulation, five states were once again
used to generate data with means $\mu = (-2,0,4,2,1)$ and
$c = (4,4,8,2,20)$.  The transition matrix
is the same as used above for Brownian motion
with drift and $T=200$ again.  The OU sample paths
coloured by their state are displayed in 
Figure~\ref{fig:oupSim}.

Table~\ref{tab:oupSim} displays the results 
of the Baum-Welch and Viterbi algorithms on
this simulated dataset. The estimated and true 
mean parameters are
$$
  \begin{pmatrix}
    \hat{\mu}\\\mu
  \end{pmatrix} =
  \begin{pmatrix}
    -1.92&-0.30&3.32&2.06&0.80\\
    -2&0&4&2&1
  \end{pmatrix}
$$
showing good parameter recovery.  In contrast, 
the estimated and true concentration parameters
do not align as well
$$
  \begin{pmatrix}
    \hat{c}\\c
  \end{pmatrix} =
  \begin{pmatrix}
    4.75&8.26&4.94&4.58&3.80\\
    4&4&8&2&20
  \end{pmatrix}.
$$
Regardless, the achieved ARI is 0.678.  The estimated 
transition matrix is similar to the true transition
matrix:
$$
\hat{A}_\mathrm{OU} = \begin{pmatrix}
 0.56 & 0.06 & 0.16 & 0.12 & 0.09 \\
 0.21 & 0.58 & 0.16 & 0.00 & 0.05 \\
 0.06 & 0.12 & 0.59 & 0.08 & 0.16 \\
 0.00 & 0.05 & 0.39 & 0.42 & 0.14 \\
 0.08 & 0.03 & 0.21 & 0.04 & 0.65 
\end{pmatrix}.
$$

\begin{figure}
    \centering
    \includegraphics[width=0.65\textwidth]{\PICDIR/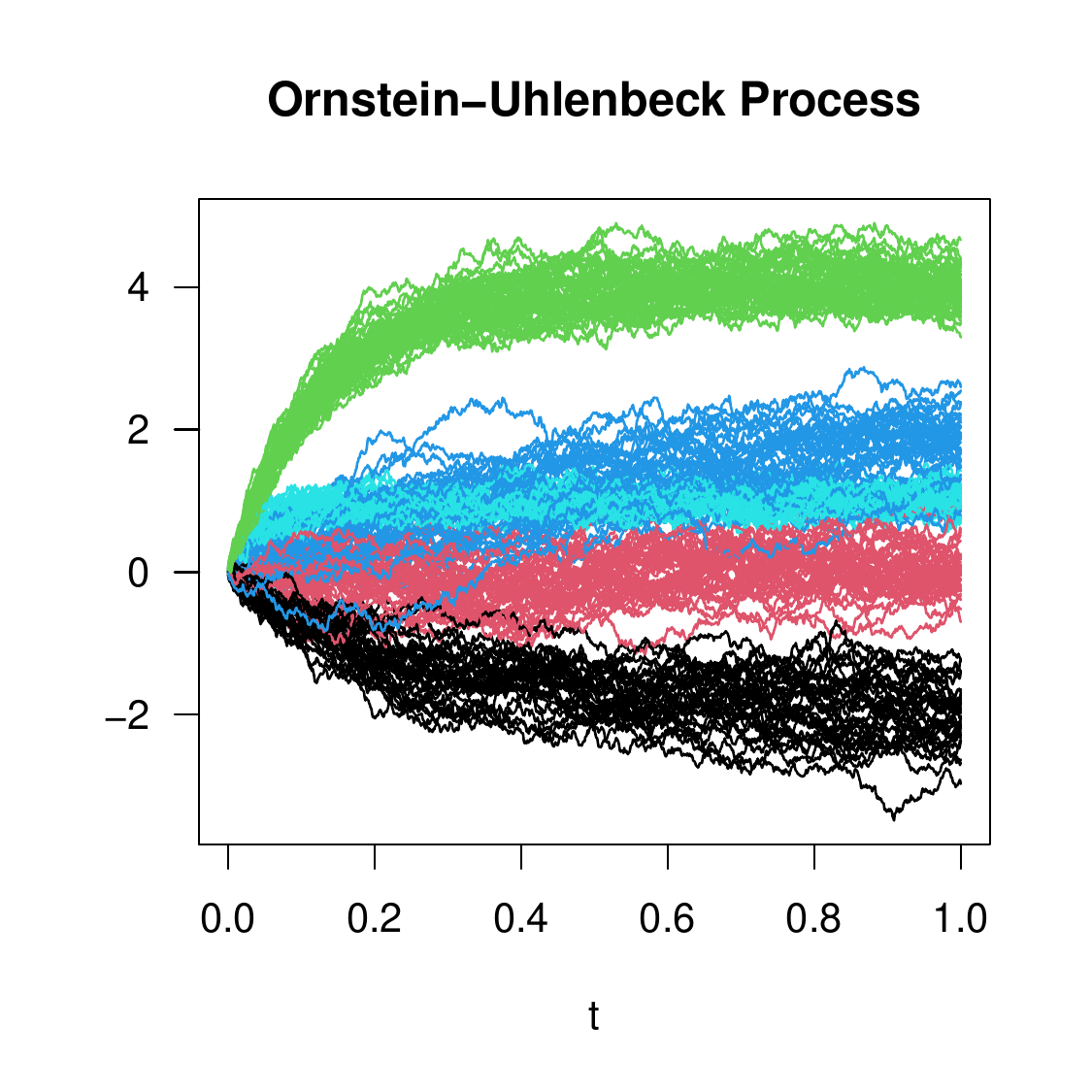}
    \caption{
      Simulated Ornstein-Uhlenbeck Processes with 5 
      states, mean parameters $\mu = (-2,0,4,2,1)$ and
      concentration parameters $c = (4,4,8,2,20)$.
    }
    \label{fig:oupSim}
\end{figure}

\begin{table}
    \centering
    \begin{tabular}{rrrrrrrr}
    \hline\hline
    &\multicolumn{6}{c}{\bf OU Process}\\
    & \multicolumn{5}{c}{\bf True States} & 
      \multicolumn{2}{c}{Estimated} \\
        &A& B& C& D& E& \multicolumn{1}{c}{mean}
        &\multicolumn{1}{c}{conc}\\
        \hline
      a &32& .&  .&  .&  .& -1.92& 4.75\\
      b &. &28&  .&  .&  .& -0.30& 8.26\\
      c &. & .& 51&  .&  .&  3.32& 4.94\\
      d &. & .&  .& 21&  .&  2.06& 4.58\\
      e &. &12&  .& 21& 35&  0.80& 3.80\\
   mean &  -2&0&4&2&1  \\
   conc &  4&4&8&2&20  \\
   \hline\hline
    \end{tabular}
    \caption{
      Confusion matrix showing the alignment of
      true states (A-E) with estimated states (a-e)
      for the OU process data.
      The ARI is 0.678.
    }
    \label{tab:oupSim}
\end{table}

\subsection{Fractional Brownian Motion}
\label{sec:FracBM}

\begin{figure}
    \centering
    \includegraphics[width=0.45\textwidth]{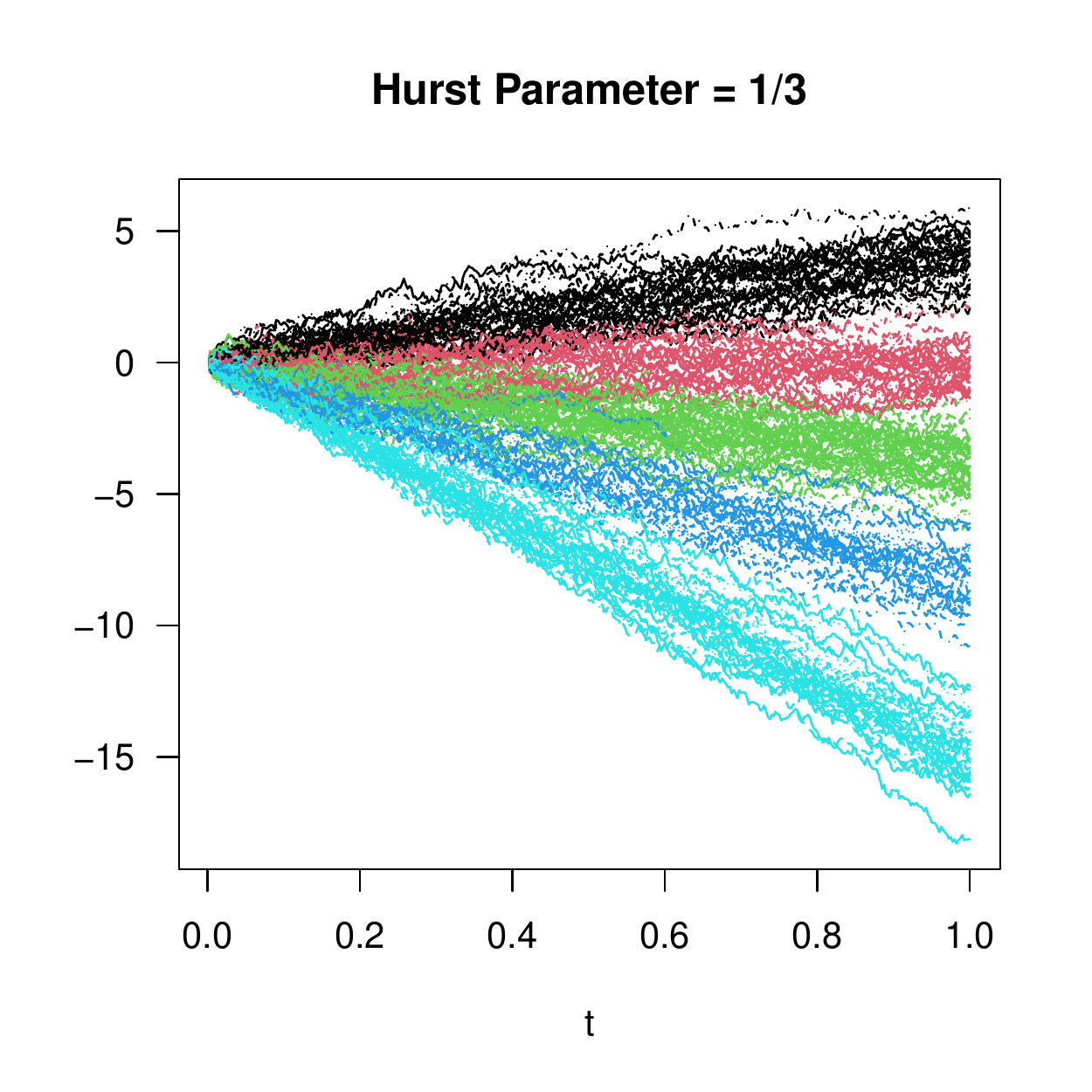}
    \includegraphics[width=0.45\textwidth]{\PICDIR/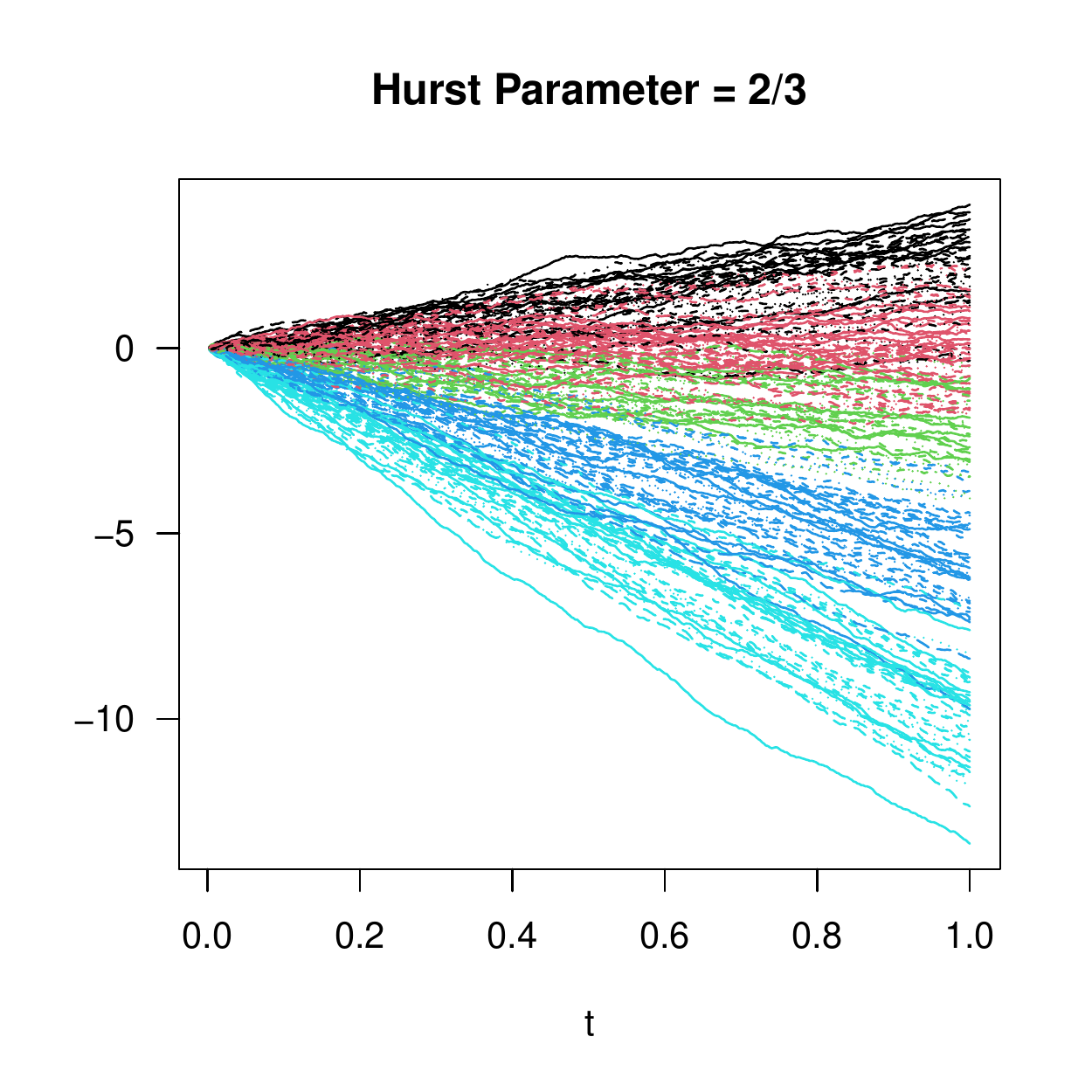}
    \caption{
      \label{fig:fracBMExamples}
      Example sample paths of fractional Brownian motion
      with Hurst parameter of 1/3 (left) and 2/3 (right).
    }
\end{figure}

To test the THMM algorithm applied to fractional 
Brownian motion, we first simulate 200 sample paths
with a Hurst parameter of $\nu=2/3$, which gives a Gaussian
process with positively correlated increments, i.e. it is 
\textit{smoother} than the standard Wiener process. 
Simulation 
was achieved by first simulating Gaussian white noise
and transforming it based on the covariance function
$$
  \cov{Y_{\tau_1}}{Y_{\tau_2}} = \frac{1}{2}\left(
    \tau_1^{2\nu} + \tau_2^{2\nu} - 
    \abs{\tau_1-\tau_2}^{2\nu}
  \right).
$$
This data is displayed in Figure~\ref{fig:fracBMExamples}.
The same transition matrix $A$ was used and the 5-long vector
of drift terms is $c = (2,0,-2,-6,-10)$.  In this work,
we treat the Hurst parameter as a tunable input to the 
THMM algorithm rather than a parameter to be estimated 
from the data.  Hence, we run the THMM algorithm for 
$\nu = 0.5, 2/3, 0.9$ to compare performance.

Table~\ref{tab:fracBMSmooth} shows the results of this
simulation.  The estimated drift vector is comparable 
between all three runs.  Choosing the correct Hurst 
parameter $\nu=2/3$ gave the highest ARI value.  However,
the other two runs are not far behind.

\begin{table}
    \centering
    \begin{tabular}{l|rrrrr|r}
    \hline
      \multicolumn{1}{c}{}
      & \multicolumn{5}{c}{Drift Vector} & ARI\\
    \hline
    Truth &  -10.\phantom{00} & -6.\phantom{00} & 
    -2.\phantom{00} &0.\phantom{00} &2.\phantom{00} & \\
    $\nu=0.5$  & -10.02& -5.72 & -1.65 &  0.37 &  2.05 & 0.557\\
    $\nu=2/3$  & -9.48 & -5.24 & -1.48 &  0.56 &  1.99 & {\bf 0.593}\\
    $\nu=0.9$  & -10.28& -5.83 & -1.51 &  0.87 &  2.06 & 0.589\\
      \hline
    \end{tabular}
    \caption{
      \label{tab:fracBMSmooth}
      A comparison of the THMM algorithm run on fractional
      Brownian motion with Hurst parameter of 2/3 for 
      different choices of $\nu$ as an input to the
      algorithm.
    }
\end{table}

We repeat the same experiment but for fractional Brownian motion
with a Hurst parameter of $1/3$, which gives negatively
correlated increments and \textit{rougher} looking paths.  The
true drift coefficients are set to be 
$(4,0,-4,-8,-15)$ as the paths are harder to distinguish 
than in the previous simulation.
The results from Table~\ref{tab:fracBMRough} show that 
setting the Hurst parameter in the THMM algorithm to 
the $\nu=1/3$ or $1/4$ gave the better ARI values than 
simply assuming we have standard Brownian motion.

\begin{table}
    \centering
    \begin{tabular}{l|rrrrr|r}
    \hline
      \multicolumn{1}{c}{}
      & \multicolumn{5}{c}{Drift Vector} & ARI\\
    \hline
    Truth &  -15.\phantom{00} & -8.\phantom{00} & 
    -4.\phantom{00} &0.\phantom{00} &4.\phantom{00} & \\
    $\nu=1/2$  & -14.31 & -6.82 & -3.91 & -1.00 & 3.24 & 0.562\\
    $\nu=1/3$  & -16.44 & -9.71 & -4.82 & -1.23 & 3.56 & {\bf 0.622}\\
    $\nu=1/4$  & -17.56 & -9.75 & -5.04 & -1.25 & 3.83 & {\bf 0.622}\\
      \hline
    \end{tabular}
    \caption{
      \label{tab:fracBMRough}
      A comparison of the THMM algorithm run on fractional
      Brownian motion with Hurst parameter of 1/3 for 
      different choices of $\nu$ as an input to the
      algorithm.
    }
\end{table}


\section{Pediatric Obstructive Sleep Apnea Data}
\label{sec:pedOSA}

\subsection{Data Overview}

Obstructive sleep apnea is a chronic condition characterized 
by frequent episodes of upper airway collapse during sleep. 
The gold standard for diagnosis of obstructive sleep apnea 
in children is by overnight polysomnography in a hospital 
or sleep clinic. Polysomnography provides multi-channel 
time series including an electroencephalogram (EEG),
electrocardiography (ECG), electrooculography (EOG),
and electromyography (EMG).

Even for a single patient, this data is vast
and should eventually be considered jointly 
to label sleep states and identify sleep disorders.
However
to illustrate application of the THMM algorithms
in this work, 
we chose one channel of EEG 
from one patient labeled as CF050
to be used as a proof-of-concept. 
This patient was selected from a group of
seventy four pediatric 
patients with potential obstructive sleep apnea 
in a clinical study, 
Pro00057638, approved by University of Alberta.
The sampling rate for EEG is 512 samples
per second. 
Each signal was split into a sequence of epochs,
i.e. 30 second intervals.
These signals were transformed into power spectral 
densities (PSD) using Welch’s method \citep{welchperiod} 
for each epoch.  The reason for 
dividing time series into 30 second intervals is 
to group them with respect to five sleep stages: 
wake, rapid eye-movement (REM), and non-rapid 
eye-movement (NREM).  The NREM category is further 
divided into three states: NREM1, NREM2, NREM3.  
The sleep stages are labelled per epoch by a sleep 
technician. 
For every sleep stage not labelled as \textit{wake}, 
the patient is considered to be asleep. The spectral 
densities for each epoch may then be used to identify 
possible underlying states such as the sleep stages.
There were 948 epochs in patient CF050.

Markov models and HMMs have a long history of being 
applied to the modelling of sleep states
\citep{HMM-Zung1965ComputerSO,HMM-Yang1973TheUO,HMM-Kemp1986SimulationOH}.
The work of \cite{HMM-penny1998gaussian} considered an
HMM with Gaussian observations for EEG analysis.
This was followed by similar analyses in 
\cite{HMM-Flexerand,HMM-Flexer}.
More recent work includes
\cite{HMM-DKS}, \cite{HMM-pan2012transition}, and
\cite{HMM-Chen}.
Typically, these approaches are based on various 
discretization and feature selection methods whose
output is then fed into the classic HMM.  Our THMM
takes the entire power spectral density curve into account and 
hence obviates the need for such feature selection 
steps.
Note that the following data analysis is meant as a 
proof-of-concept for the proposed methodology; a 
comprehensive analysis of the full multichannel dataset 
is left to future work.

\subsection{Raw Data Analysis}

The EEG PSD curves for patient CF050 was run through the THMM
variant of the Baum-Welch algorithm under Wiener measure
and the $H=W_0^{2,1}[0,1]$ norm.  
The first experiment combined REM and the NREM states
into one category ``asleep'' to contrast with the
``awake'' category.
Figure~\ref{fig:eeg50TwoState} shows the data and
the predicted vs the true state means.  The estimated
mean curves are very similar to the true mean curves.
The estimated Markov transition matrix and the
``true'' transition probabilities are, respectively,
$$
  \hat{A} = \begin{pmatrix}
    0.910 & 0.090 \\
    0.176 & 0.824
  \end{pmatrix}
  ~\text{ and }
  {A} = \begin{pmatrix}
    0.933 & 0.067 \\
    0.025 & 0.975
  \end{pmatrix}
$$
where $A$ was computed by counting the number of 
transitions between states as labelled by the sleep
technician.
It is worth noting that patients typically remain 
asleep or awake for many sequential epochs, i.e. 30-second
time segments.

\begin{figure}
    \centering
    \includegraphics[width=0.45\textwidth]{\PICDIR/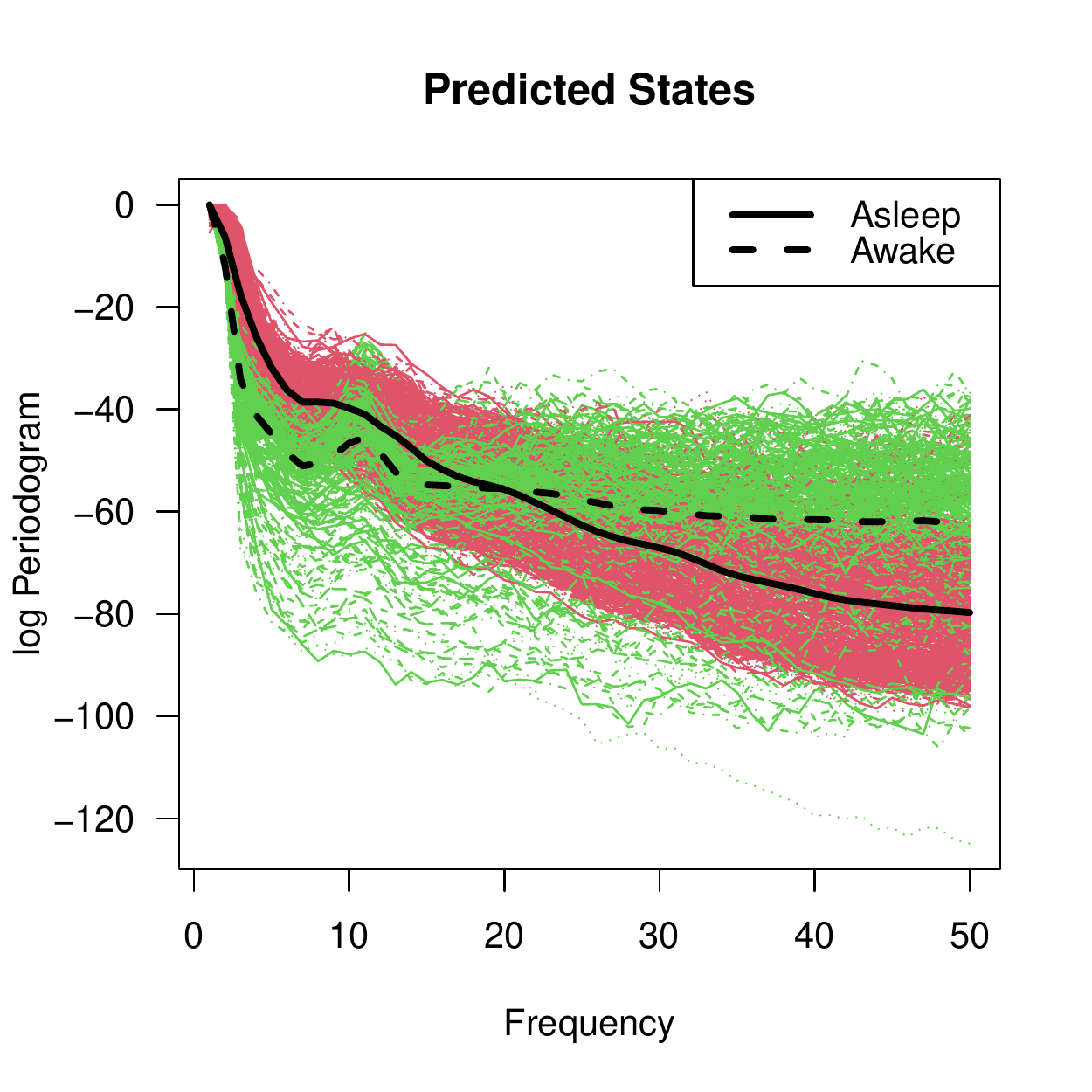}
    \includegraphics[width=0.45\textwidth]{\PICDIR/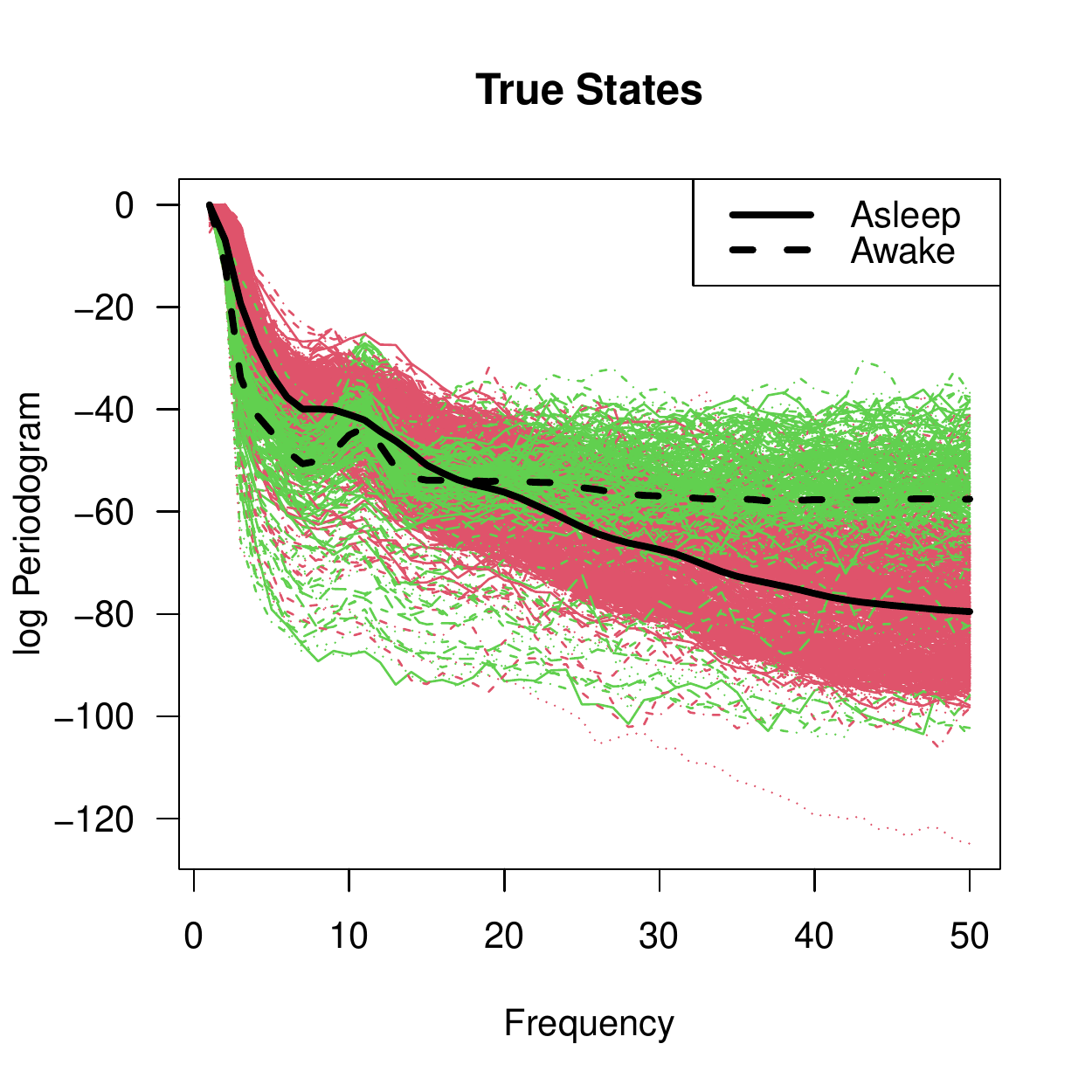}
    \caption{
      \label{fig:eeg50TwoState}
      A comparison of the predicted and true states and means 
      for the EEG PSD dataset over the first 50 frequencies.
    }
\end{figure}

For this experiment, a particularly
poor starting point for the Baum-Welch algorithm
was chosen so that the progress of the algorithm
could be tracked with respect to both the 
likelihood and the ARI.  The left plot in 
Figure~\ref{fig:eeg50ARI} shows
how the likelihood grows over the initial 
iterations only to plateau around iteration 10.
However, it begins to climb to a new plateau after
30-40 iterations.  In unison, the right plot shows
improvement in the ARI from below zero to its 
final value of 0.68.  The final confusion 
matrix is displayed on the left side of 
Table~\ref{tab:confusePOSA}, and a 
comparison of the predicted state sequence 
via the Viterbi algorithm with the true state
sequence is featured in Figure~\ref{fig:osaBestStates}.
Most alternative starting points for the 
algorithm converged
to the same final parameters.  In some cases, the 
ARI rose above 0.680 only to fall back to it.
Hence, it is worth emphasizing that each iteration
of the Baum-Welch algorithm will increase the likelihood
but may result in an increase or a decrease in the ARI.

\begin{table}
    \centering
    \begin{tabular}{rrrcrr}
      \hline\hline 
    & \multicolumn{5}{c}{\bf True States} \\
    & \multicolumn{2}{c}{\bf Raw Data}
    && \multicolumn{2}{c}{\bf Smoothed Data}\\
        Predicted
        &Awake& Asleep &&Awake& Asleep \\
        \hline
       Awake  & 620 &   6 &&  649 &  12\\
       Asleep &  75 & 247 &&   46 & 241\\
   \hline\hline
\end{tabular}
    \caption{
      \label{tab:confusePOSA}
      Confusion matrices for the results of fitting
      a two-state THMM to raw and kernel smoothed 
      EEG PSD curves, left and right, respectively. 
    }
    
\end{table}

\begin{figure}
    \centering
    \includegraphics[width=0.45\textwidth]{\PICDIR/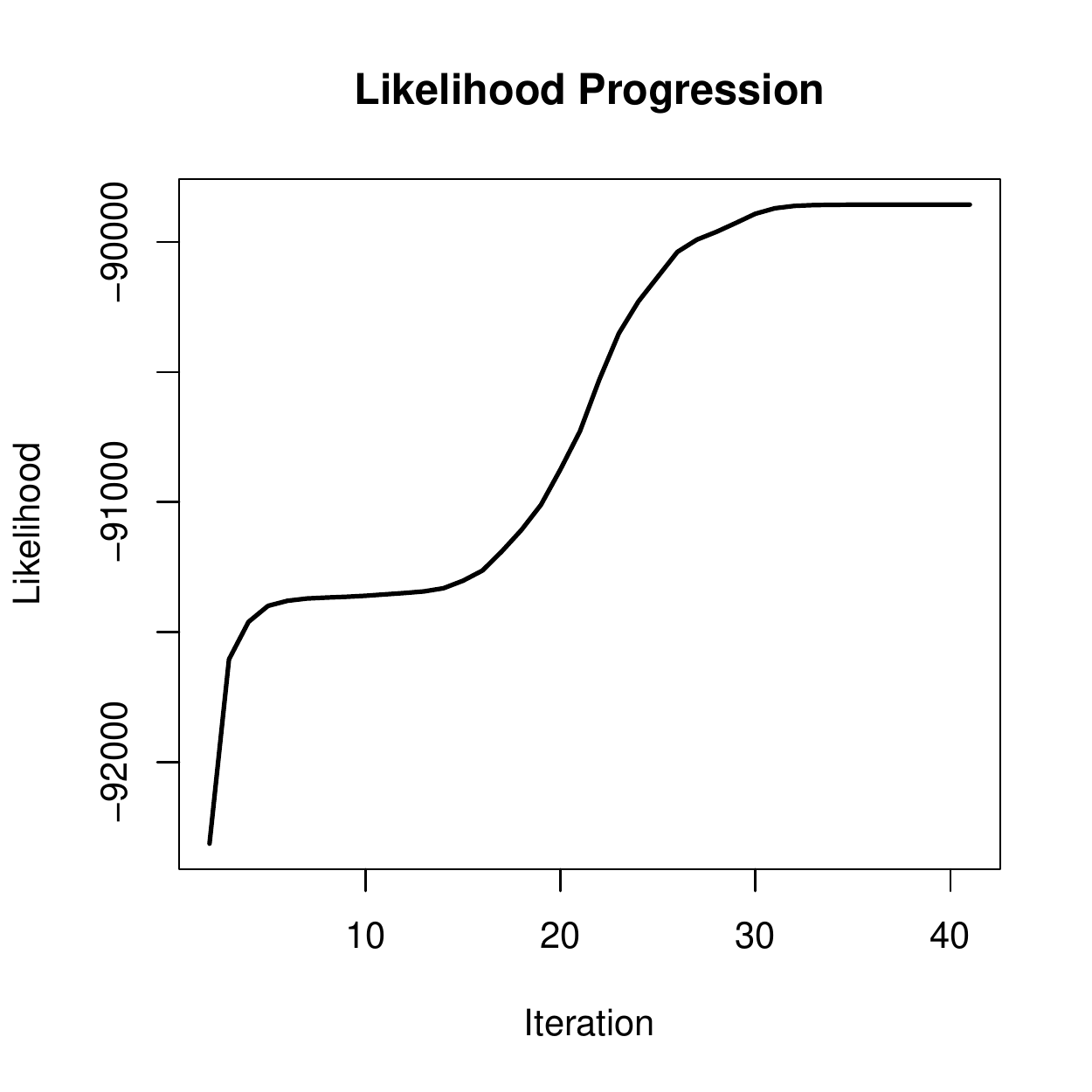}
    \includegraphics[width=0.45\textwidth]{\PICDIR/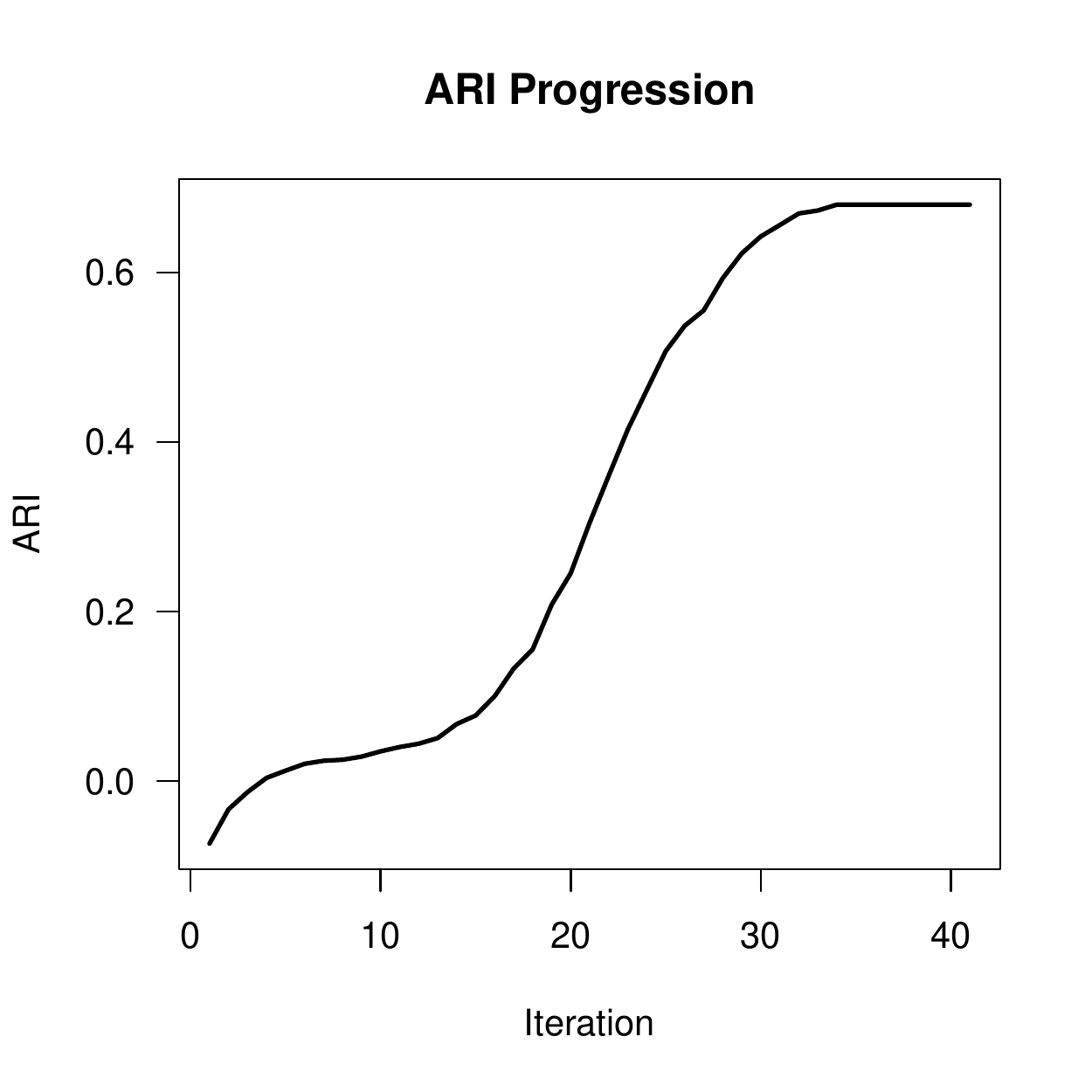}
    \caption{
      \label{fig:eeg50ARI}
      Tracking the increase in likelihood (left) and ARI (right)
      as the THMM algorithm runs.
    }
\end{figure}

\begin{figure}
    \centering
    \includegraphics[width=0.75\textwidth]{\PICDIR/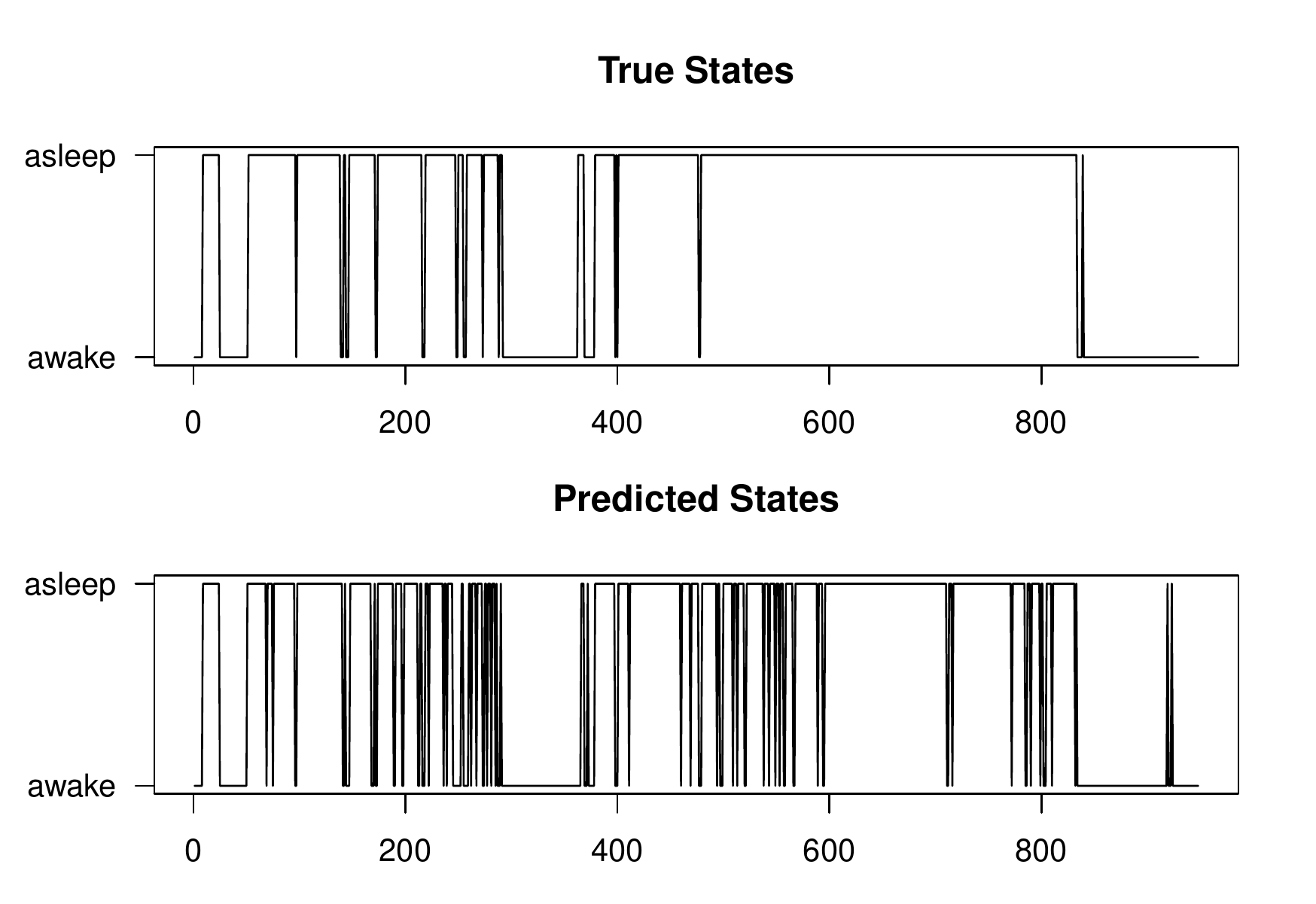}
    \caption{
      \label{fig:osaBestStates}
      Comparison of the predicted state sequence 
      via the THMM and the true state sequence based
      on the raw EEG PSD curves.
    }
\end{figure}

Performing the same analysis for a three-state 
THMM model on the same data yields a final 
ARI of 0.335.  In this case, the algorithm has 
a harder time separating the REM and NREM states, 
which have similar trajectories.  

\subsection{Kernel Smoothed Data Analysis}

A subtle point from the above theory is that
the observations $O_t$
do not remain as elements of the LCTVS $X$, 
but are instead considered within the Cameron-Martin
space $H$.  In the
case of Wiener measure and $W_0^{2,1}[0,1]$,
that implies that we want a smooth analogue 
of $O_t$.  Thus, we can use, for example, a 
Gaussian kernel smoother to create a smoothed 
version of each $O_t$.  
Repeating the above experiment for for a 
smoothed EEG PSD curves
with two states improves the ARI from 0.680 
to 0.762.  The confusion matrix is displayed
on the right side of Table~\ref{tab:confusePOSA}.

In contrast, attempting a three-state THMM 
on the smoothed EEG PSD curves with the 
$W_0^{2,1}[0,1]$ norm resulted in a marginally
smaller ARI of 0.328 when compared to the 
original unsmoothed curves.  However,
using instead the $W_0^{2,2}[0,1]$ norm on the 
smoothed curves marginally 
improved the ARI to 0.368.  Thus, future investigations
into this and the full five-state model will require
further innovation.  This potentially includes 
fitting a THMM to multichannel signals data and
considering other Gaussian measures / Cameron-Martin 
spaces to work within.

\subsection{Comparison with Functional k-means}

We compare the performance of our THMM method
with functional data k-means clustering 
\citep{SANGALLIkma1,SANGALLIkma2}
as implemented in the R package 
\texttt{fdakma} \citep{FDAKMA}.
While this k-means algorithm is agnostic to the 
time ordering of the data, its clustering criteria
is very similar in style to THMM based on the 
Cameron-Martin norm.  Specifically, the \texttt{kma}
function allows for a variety of dissimilarity measures
between pairs of functions $f$ and $g$ including the $L^2[0,1]$
distance $\norm{f-g}_{L^2}$ and the $W_0^{2,1}[0,1]$ distance
$\norm{f'-g'}_{L^2}$ as well as the methods that compute
the Pearson similarity between $f$ and $g$ or $f'$ and $g'$.

We compare our THMM performance to the functional data k-means
algorithm
for both raw and smoothed data and 2 and 3 clusters
by running each algorithm in each experimental setting
for 20 random starts. 
The results are displayed in Table~\ref{tab:kmaCompare}.
For raw (unsmoothed) EEG PSD curves, the THMM achieves the highest ARI
on average over the 20 random starts.  Also, the standard 
deviations are close to or equal to zero showing that both 
the THMM and k-means algorithms consistently find the 
same optima.  In contrast, attempting to split the smoothed EEG
PSD curves into two clusters resulted in large standard deviations 
for the THMM and k-means in $W_0^{2,1}$ and consistently poor
performance for k-means in $L^2$.  The THMM achieved an ARI of
0.762 in 17 out of 20 runs and 0.00 in the other three runs.
The functional k-means achieved 0.768 in 16 out of 20 runs and 
0.00 in the other four.  In contrast, when the number of clusters
is set to three, the THMM consistently achieved an ARI of 0.329
on the smoothed EEG PSD curves.  
The functional k-means mostly returned
an ARI of 0.312, but rose to 0.571 for two of the 20 runs.

In summary, the THMM algorithm outperforms functional data k-means
in $W_0^{2,1}$ for this set of raw unsmoothed EEG PSD curves.  After 
passing the data through a Gaussian kernel smoother, the performance
of the two methods is more comparable.  Running functional data k-means
in $L^2$ seems to perform more poorly than the other methods in all cases.
We also note that the k-means algorithm does not immediately yield 
a measure of how \textit{good} a given clustering is whereas the 
likelihood can be compared between multiple THMM runs to choose
the overall optimal one.
It may also be of future interest to use the functional data k-means
algorithm as a way to initialize the THMM especially for more massive
and complex datasets.

\begin{table}
  \centering
  \begin{tabular}{lrrrr}
    \hline
                        & \multicolumn{2}{c}{Two Clusters} & 
                          \multicolumn{2}{c}{Three Clusters}\\
                method  & Raw & Smooth & Raw & Smooth\\
                        \hline
    THMM $W_0^{2,1}$    & {\bf 0.680} (0.00) &  {\bf 0.641} (0.29) & {\bf 0.347} (0.05) & 0.329 (0.00) \\
    k-means $L^2$       & 0.013 (0.00) &  0.014 (0.00) & 0.211 (0.08) & 0.202 (0.08) \\
    k-means $W_0^{2,1}$ &-0.060 (0.00) &  0.617 (0.32) & 0.227 (0.06) & {\bf 0.337} (0.08) \\
    \hline
  \end{tabular}
  \caption{
    \label{tab:kmaCompare}
    ARI for clustering 2 and 3 state EEG data using the THMM with
    Wiener measure and functional k-means with $L^2$ and $W_0^{2,1}$
    metrics.  20 random starts were used to get the average performance
    of each method with standard deviation in parentheses.  
  }
\end{table}

\section{Cumulative Snowfall Curves}
\label{sec:snowfall}

An alternative application of the THMM is to model climate data.
In this section, we consider 50 years (winters) of cumulative
snowfall growth curves from the city of Edmonton, Alberta
as recorded by the Meteorological Service of Canada 
(see \url{https://climate.weather.gc.ca/}).
The data was collected at station number 3012208
at latitude 53\degree 34'24 N longitude 113\degree 31'06 W, which 
was the location of the now-closed City Centre Airport.
The data considered spans from the winter of 1940/1941
until the winter of 1989/1990 and was collected 
daily.  A Gaussian kernel smoother
was used to preprocess these growth curves prior to 
analysis.

\subsection{ Two-State Models }

If only the total snowfall is considered as a univariate
time series, a classic Gaussian HMM model can be fit
using the \texttt{HiddenMarkov} package in R 
\citep{HiddenMarkov}.  This naturally splits winters
into high snowfall (mean = 192.9 cm) and low snowfall
(mean = 111.9) years.  Similarly, a two-state THMM
model under the $W_0^{2,1}$ norm also splits the 
winters into heavier and lighter snowfalls.  However,
the heavy snowfall category contains those years with
consistently higher snowfall over the entire winter;
i.e. the heavy snowfall curves are shifted up and left.

Figure~\ref{fig:SnowTwoState} displays the results of 
both the HMM and the THMM fit to total and cumulative 
snowfall, respectively.  In both cases, the fitted
transition matrix indicates that back-to-back heavy 
snowfall winters are unlikely:
$$
  \tilde{A}_\text{HMM} = \begin{pmatrix}
    0 & 1 \\
    0.25 & 0.75
  \end{pmatrix},
  ~~~
  \tilde{A}_\text{THMM} = \begin{pmatrix}
    0 & 1 \\
    0.17 & 0.83
  \end{pmatrix}.
$$
However, the THMM focuses on consistently heavy snowfall
throughout the winter whereas the HMM only is concerned
with the final total.

\begin{figure}
    \centering
    \includegraphics[width=0.475\textwidth]{\PICDIR/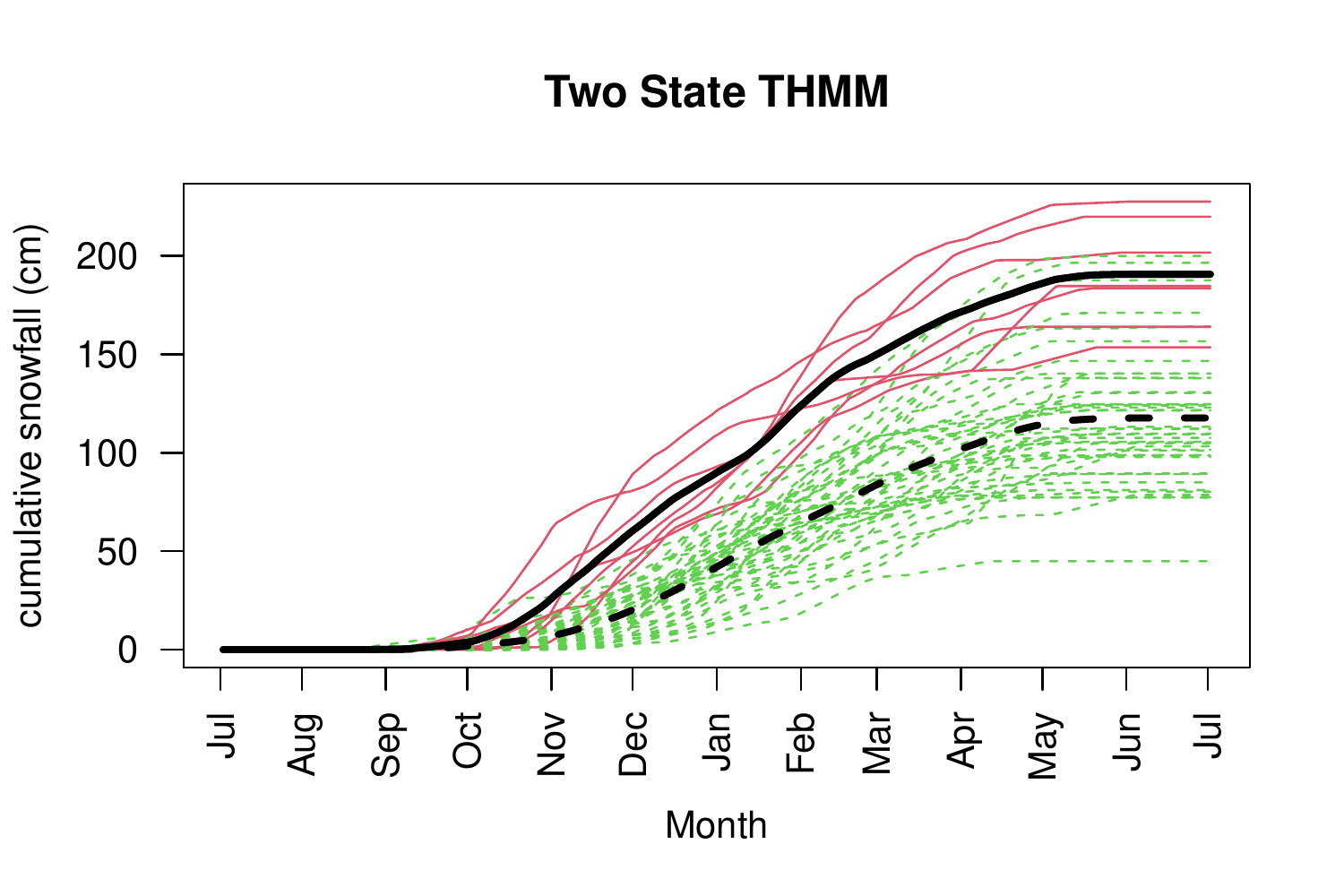}
    \includegraphics[width=0.475\textwidth]{\PICDIR/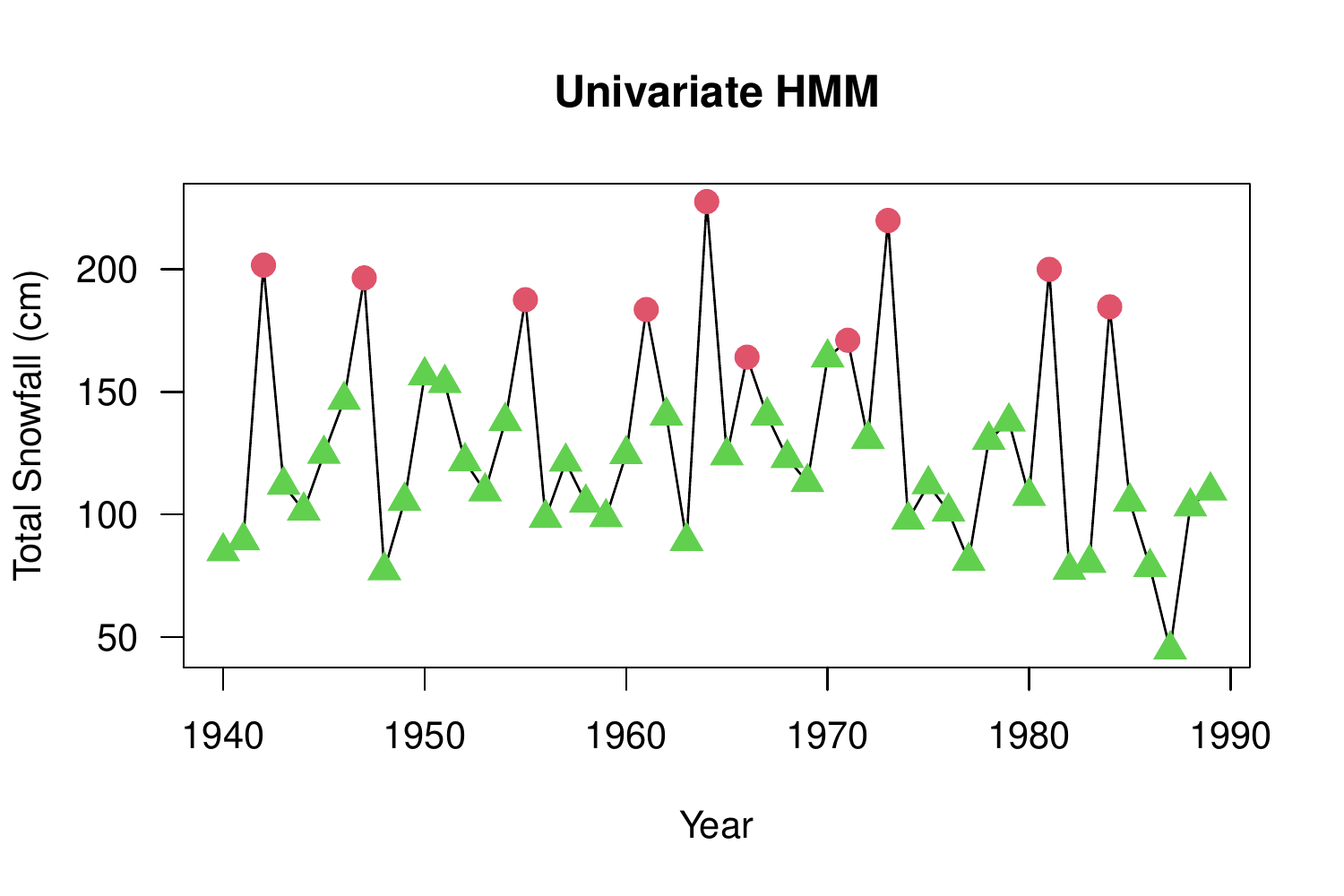}
    \caption{
      \label{fig:SnowTwoState}
      Fitted two state THMM (left) and HMM (right) for
      cumulative and total snowfall in the city of Edmonton
      Alberta, respectively.
    }
\end{figure}

\subsection{ Four-State THMM }

A four-state THMM model applied to these 
smoothed cumulative snowfall growth curves results in 
more interesting findings.  Mainly, it identifies 
three Markov states corresponding to low, medium, 
and high snowfall.  The fourth state is reserved
for the winter of 1954/1955 alone, which had little snowfall
until 18 April when 47.5 cm of snow fell during a three-day
storm.  This event was the largest recorded snowfall in 
Edmonton's recorded history.  The fitted 
transition matrix is
$$
  \tilde{A} = \begin{pmatrix}
     0 & 1 & 0 & 0 \\
     0 & 0 & 0.43 & 0.57 \\
     0 & 0.40 & 0.52 & 0.08 \\
     0.05 & 0.15 & 0.18 & 0.62
  \end{pmatrix}.
$$
Note that $\tilde{A}_{2,2}=0$ indicating 
that back-to-back heavy snowfall years do not occur 
in this 50 year dataset.  This is consistent with 
the two-state models discussed above.
A plot of the data with the four mean curves superimposed
on top is displayed in Figure~\ref{fig:SnowFourState}.

\begin{figure}
    \centering
    \includegraphics[width=0.7\textwidth]{\PICDIR/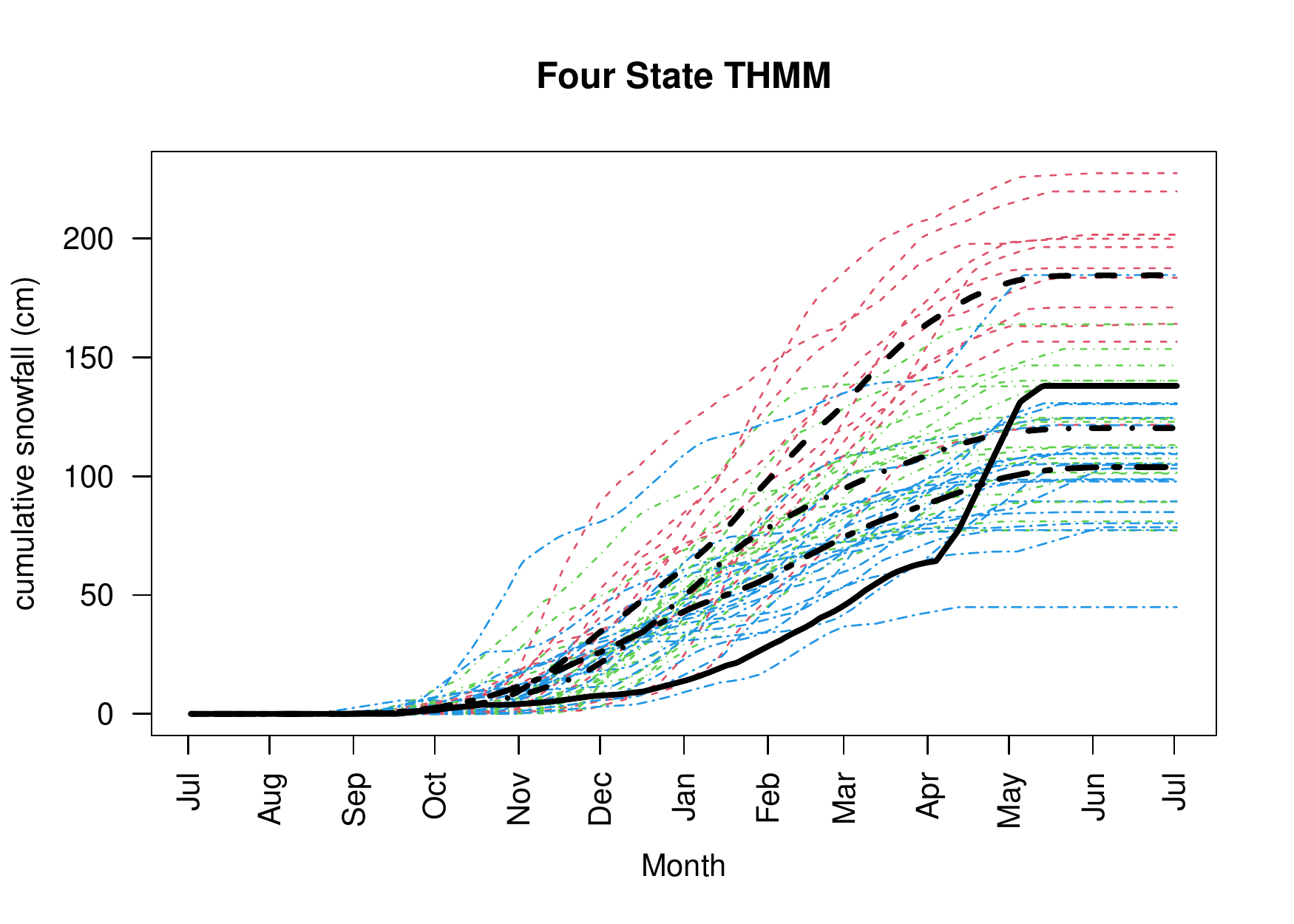}
    \caption{
      \label{fig:SnowFourState}
      Fitted four state THMM for
      cumulative and total snowfall in the city of Edmonton
      Alberta, respectively.
    }
\end{figure}

\section{Future Investigations}
\label{sec:discussion}

\subsection{Estimation of the Hurst Parameter}
\label{sec:hurstEstimation}

In this work, we only consider fractional Brownian motion
with a fixed Hurst parameter chosen by the analyst.  
However much past research 
has gone into estimation of the Hurst parameter from 
data; see \cite{BERZIN2008,KUBILIUS2012,HU2019} and others. 
Future work
could incorporate estimation of the Hurst parameter within
the THMM algorithms.  This would allow for models to be fit
where each Markov state will have its own estimated Hurst 
parameter and thus its own Gaussian Measure / Cameron-Martin 
space.  

\subsection{Regularized Learning}
\label{sec:regularizedLearning}

The Onsager-Machlup functional for a Gaussian measure 
as first presented in \cite{BOGACHEV1998} is 
$$
  \lim_{\veps\rightarrow0}
  \frac{\gamma(V_\veps+h)}{\gamma(V_\veps+k)}
  = \exp\left(
    \frac{1}{2}\abs{\pi_qk}_H^2 - \frac{1}{2}\abs{\pi_qh}_H^2
  \right).
$$
In this work, we set $k=0$ and $h = O_t - h_j$.  However, 
retaining $k$ in the above equation results in the 
following optimization problem:
$$
  \argmin{h_j\in H}
  \sum_{t=1}^T\alpha_t(j)\beta_t(j)\left\{
    \abs{O_t - h_j}_H^2 - \abs{k}_H^2
  \right\}.
$$
Thus, choosing a nonzero $k$ to be a function of $h_j$
will affect the final fit of the THMM.

\subsection{Extensions Beyond the HMM}
\label{sec:extensions}

The Hidden Markov Model is an eminently useful modelling tool.  
However, there are many models that extend and complicate the
beautiful simplicity of the original HMM.  Future work can 
consider extending our proposed THMM from a finite number 
of discrete states to a countably infinite state space
using ideas from the infinite HMM \citep{BEAL2001}.
Continuous state space HMMs have also been considered
with respect to the Kalman filter.
Other ways of adding dependency exist as well including 
autoregressive HMMs that are discussed in 
\cite{JUANG1985} and 
\cite{RABINER1989} as well as in the recent 
works of \cite{LAWLER2019} and \cite{SIDROW2021} 
who use this tool to model animal behaviour. 
Lastly, a Topological Hidden Markov Random Field
could be implemented to model spatial time series
of climate data.

\bibliographystyle{plainnat}
\def\BIBDIR{latexFiles}
\bibliography{./kasharticle,./kashbook,./kashpack,./kashself,./prachi,./giseon}

\end{document}